\newcommand{\lca}{\ensuremath{\operatorname{lca}}}
\newcommand{\Virr}{\ensuremath{V^{\times}_{\mathrm{irr}}}}
\newcommand{\Gs}{\ensuremath{G^{\star}}}
\newcommand{\Es}{\ensuremath{E^{\star}}}
\newcommand{\Rs}{\ensuremath{R^{\star}}}
\newcommand{\Hs}{\ensuremath{\mathcal{H}^{\star}}}
\newcommand{\Rone}{\ensuremath{\overrightarrow{1}}}
\begin{document}
\frontmatter          % for the preliminaries
\pagestyle{headings}  % switches on printing of running heads
\mainmatter              % start of the contributions
\title{Partial Homology Relations \-- Satisfiability in terms of Di-Cographs}
\titlerunning{XXX}  % abbreviated title (for running head)
%                                     also used for the TOC unless
%                                     \toctitle is used
%
\author{Nikolai N{\o}jgaard\inst{1,2} \and Nadia El-Mabrouk\inst{3} \and Daniel Merkle\inst{2} Nicolas Wieseke\inst{4} \and Marc Hellmuth\inst{1,5}}
%Marc Hellmuth and Nicolas Wieseke

%\authorrunning{Hellmuth et al.} % abbreviated author list (for running head)
\authorrunning{N{\o}jgaard et al.} % abbreviated author list (for running head)
%
%%%% list of authors for the TOC (use if author list has to be modified)
%\tocauthor{Ivar Ekeland, Roger Temam, Jeffrey Dean, David Grove,
%Craig Chambers, Kim B. Bruce, and Elisa Bertino}
%
\institute{
Institute of Mathematics and Computer Science, University of Greifswald, Walther-
  Rathenau-Strasse 47, D-17487 Greifswald, Germany 
\email{mhellmuth@mailbox.org}
\and
Department of Mathematics and Computer Science,
				    University of Southern Denmark, 
						Campusvej 55,	DK-5230 Odense M, Denmark \\
\and
	Department of Computer Science
	and Operational Research,
	University of Montreal, 
	CP 6128 succ Centre-Ville,
	Montreal,
	Canada \\
\and
 Swarm Intelligence and Complex Systems Group,
        Department of Computer Science, University of Leipzig,
        Augustusplatz 10, D-04109 Leipzig, Germany\\
\and
Center for Bioinformatics, Saarland University, Building E 2.1, D-66041
Saarbr{\"u}cken, Germany \\
%\email{wieseke@informatik.uni-leipzig.de}
}

\maketitle              % typeset the title of the contribution

\begin{abstract}
Directed cographs (di-cographs) play a crucial role in the reconstruction of
	evolutionary histories of genes based on homology relations which
are binary relations between genes. A variety of methods based on pairwise
	sequence comparisons can be used to infer such homology relations (e.g.\
	orthology, paralogy, xenology).  
	They are \emph{satisfiable} if the relations can be explained by an event-labeled gene tree, i.e., 
they can simultaneously co-exist in an evolutionary history of the underlying genes. 
Every gene tree is equivalently interpreted as a so-called cotree that entirely encodes the structure of a di-cograph. 
Thus, satisfiable homology relations must necessarily form a di-cograph. 
The inferred homology relations might not cover each pair of genes and thus,  provide only partial knowledge 
on the full set of homology relations. Moreover, for particular pairs of genes, it might be known with a 
high degree of certainty that they are not orthologs (resp.\ paralogs, xenologs) which yields forbidden pairs of genes.  
Motivated by this observation, we characterize (partial) satisfiable homology relations  with or without forbidden gene pairs, 
provide a quadratic-time algorithm for their recognition and for the
	computation of a cotree that explains the given relations. 

\keywords{Directed Cographs, Partial Relations, Forbidden Relations, Recognition Algorithm, Homology, Orthology, Paralogy, Xenology}
\end{abstract}

\sloppy
\section{Introduction}\vspace{-0.1in}
Directed cographs (di-cographs) are a well-studied class of graphs that 
can uniquely be represented by an ordered rooted tree $(T,t)$, called cotree,  
where each inner vertex gets a
unique label ``$1$'', ``$\Rone$'' or ``$0$'' \cite{BLS:99,Corneil:81,CP-06,EHPR:96}. 
%Directed cographs (di-cographs) are recursively defined as follows: a
%single vertex graph $K_1$ is a di-cograph, and for two di-cographs their disjoint
%union and their undirected and directed join is a di-cograph
%\cite{EHPR:96,CP-06}. Di-cographs are a generalization of the better-known
%undirected cographs \cite{Corneil:81,BLS:99}. While di-cographs are characterized
%by the absence of eight induced forbidden subgraphs, undirected cographs do
%not contain induced paths on four vertices. Every di-cograph $G$ is associated
%with a unique ordered rooted tree $(T,t)$ where each inner vertex gets a
%unique label ``$1$'', ``$\Rone$'' or ``$0$''. This tree $(T,t)$ is called a
%cotree and unambiguously represents the topology of $G$. 
In particular, di-cographs have been shown to play an important role for the reconstruction of
the evolutionary history of genes
or species based on genomic sequence data
\cite{HSW:16,HHH+13,Hellmuth:15a,lafond2015orthology,DEML:16,LDEM:16}. Genes are
the molecular units of heredity holding the information to build and maintain
cells. During evolution, they are mutated, duplicated, lost and passed to
organisms through speciation or horizontal gene transfer (HGT), which is the
exchange of genetic material among co-existing species. A gene family comprises
genes sharing a common origin. Genes within a gene family are called
\emph{homologs}. 

The history of a gene family is equivalent to an event-labeled
gene tree, the leaves correspond to extant genes, internal vertices to ancestral genes
and the label of an internal vertex highlighting the event at the origin of the
divergence leading to the offspring, namely \emph{speciation-},
\emph{duplication-} or \emph{HGT-events} \cite{Fitch2000}. Equivalently, the
history of genes is described by an event-labeled rooted tree for which each
inner vertex gets a unique label ``$1$'' (for speciation) , ``$0$''
(for duplication) or ``$\Rone$'' (for HGT). In other words, any
gene tree is also a cotree. The type of event ``$1$'', ``$0$'' and ``$\Rone$''
of the lowest common ancestor of two genes gives rise to one of three distinct
\emph{homology relations} respectively, the \textit{orthology-relation} $R_1$, the
\textit{paralogy-relation}
$R_0$ and the \textit{xenology-relation} $R_{\Rone}$. The
orthology-relation $R_1$ on a set of genes forms an undirected cograph
\cite{HHH+13}. In \cite{HSW:16} it has been shown that the graph $G$ with arc
set $E(G) = R_1\cup R_{\Rone}$ must be a di-cograph (see \cite{HW:16b} for a
detailed discussion). 

In practice, these homology relations are often estimated
from sequence similarities and synteny information, without requiring any
\emph{a priori} knowledge on the topology of either the gene tree or the species
tree (see e.g.\
\cite{Altenhoff:09,AGGD:13,ASGD:11,CMSR:06,Lechner:11a,Lechner:14,inparanoid:10,TG+00,T+11,Dessimoz2008,LH:92,RSLD15}).
The starting point of this contribution is a set of estimated relations $R_1$, $R_{\Rone}$
and $R_0$. In particular, we consider so-called \emph{partial} and
\emph{forbidden} relations: In fact,
similarity-based homology inference methods often depend on particular threshold
parameters to determine whether a given pair of genes is in one of $R_0,R_1$ or
$R_{\Rone}$. Gene pairs whose sequence similarity falls below (or above) a
certain threshold cannot be unambiguously identified as belonging to one of the
considered homology relations. Hence, in practice one usually obtains partial
relations only, as only subsets of these relations may
be known. Moreover, different homology inference methods may lead to different
predictions. Thus, instead of a yes or no orthology, paralogy or xenology assignment,
a confidence score can rather be assigned to each relation~\cite{DEML:16}. A
simple way of handling such weighted gene pairs is to set an upper threshold
above which a relation is predicted, and a lower threshold under which a
relation is rejected, leading to partial relations with forbidden gene pairs,
i.e., gene pairs that are not included in any of the three relations but for
which it is additionally known to which relations they definitely \emph{not}
belong to.

In this contribution, we generalize results established by 
Lafond and El-Mabrouk \cite{Lafond2014} and
characterize satisfiable partial relations with and
without forbidden relations. We provide a recursive quadratic-time algorithm 
testing whether the considered relations are satisfiable, and if so
reconstructing a corresponding cotree. This, in turn, allows us to extend
satisfiable partial relations to full relations. Finally, we
evaluate the accuracy of the designed algorithm on large-scaled simulated data
sets. As it turns out, it suffices to have only a very few but correct pairs
of relations to recover most of them.

Note, the results established here may also be of interest for a broader 
scientific community. Di-cographs play an important role in computer science 
because many combinatorial  optimization problems that are NP-complete for arbitrary graphs 
become polynomial-time solvable on di-cographs \cite{Corneil:85, BLS:99,Gao:13}.
However, the cograph-editing problem is NP-hard \cite{Liu:12}. 
Thus, an attractive starting point for heuristics that edit a given graph to
a cograph  may be the knowledge of satisfiable parts that 
eventually lead to partial information of the underlying di-cograph structure
of the graph of interest.

\vspace{-0.1in}
\section{Preliminaries}\vspace{-0.1in}
\label{sec:prelim}

\paragraph{\bf Basics.}
In what follows, we always consider \emph{binary} and \emph{irreflexive}
relations $R\subseteq \Virr\coloneqq V\times V \setminus \{(v,v)\mid v\in V\}$
and we omit to mention it each time. If we have a non-symmetric relation $R$,
then we denote $R^{\mathrm{sym}} = R \cup \{(x,y) \mid (y,x)\in R\}$ the
\emph{symmetric extension} of $R$ and by $\overleftarrow{R}$ the set $\{(x,y)
\mid (y,x)\in R\}$. For a subset $W\subseteq V$ and a relation $R$, we define
$R[W] = \{(x,y)\in R \mid x,y\in W\}$ as the \emph{sub-relation of $R$ that is
induced by $W$}. Moreover, for a set of relations $\mathcal{R}=\{R_1,\dots,
R_n\}$ we set $\mathcal{R}[W]=\{R_1[W],\dots, R_n[W]\}$.

A directed graph (digraph) $G=(V,E)$ has vertex set $V(G)=V$ and arc set
$E(G)=E\subseteq \Virr$. Given two disjoint digraphs $G=(V,E)$ and $H=(W,F)$,
the digraph $G\cup H = (V\cup W, E\cup F)$, $G\oplus H = (V\cup W, E\cup F\cup
\{(x,y),(y,x)\mid x\in V,y\in W\})$ and $G\oslash H = (V\cup W, E\cup F\cup
\{(x,y)\mid x\in V,y\in W\})$ denote the \emph{union}, \emph{join} and
\emph{directed join} of $G$ and $H$, respectively. For a given subset $W\subseteq V$, 
the \emph{induced subgraph} $G[W]=(W,F)$ of $G=(V,E)$ is the subgraph for which $x,y\in W$ and $(x,y)\in
E$ implies that $(x,y)\in F$. We call $W\subseteq V$ a \emph{(strongly)
connected component} of $G=(V,E)$ if $G[W]$ is a \emph{maximal} (strongly)
connected subgraph of $G$. 

Given a digraph $G=(V,E)$ and a partition $\{V_1,V_2,\dots,V_k\}$ of its vertex
set $V$, the \emph{quotient digraph $G/\{V_1,V_2,\dots,V_k\}$ has as vertex
set} $\{V_1,V_2,\dots,V_k\}$ and two distinct vertices $V_i, V_j$ form an arc
$(V_i, V_j)$ in $G/\{V_1,\dots,V_k\}$ if there are vertices $x\in V_i$ and $y\in
V_j$ with $(x,y)\in E$.

An acyclic digraph is called \emph{DAG}. It is well-known that the vertices of
a DAG can be \emph{topologically ordered}, i.e., there is an ordering of the
vertices as $v_1, \dots, v_n$ such that $(v_i,v_j)\in E$ implies that $i<j$. To
check whether a digraph $G$ contains no cycles one can equivalently check
whether there is a topological order, which can be done via a depth-first search
in $O(|V(G)|+|E(G)|)$ time.

%The next lemma shows that to each DAG, edges can be added such that the resulting
%graph is a directed join. 
%\begin{lemma}
%	Let $G=(V,E)$ be a DAG with topological order 	$v_1, \dots, v_n$ of its vertices.
%	Let $E^* = E \cup \{(v_i,v_j) \in \lfloor V \rfloor_{\textrm{irr}} \setminus E \mid i<j\}$
%	and $G_i = (\{v_i\}, \emptyset)$, $1\leq i \leq n$.
%	Then, $G^*=(V,E^*)= G_1\oslash G_2 \oslash\dots\oslash G_n$. 
%\label{lem:dag2join}
%\end{lemma}
%\begin{proof}
%	Since 	$G=(V,E)$  is a DAG, each edge $(v_i,v_j)\in E$ implies that $i<j$. 
%	By construction, we have in $G^*=(V,E^*)$ that $i<j$ implies 
%	$(v_i,v_j)\in E^*$. Hence, $(v_i,v_j)\in E^*$ if and only if $i<j$ and, therefore, 
%	$G^*$ is the join $G_1\oslash G_2 \oslash\dots\oslash G_n$. 
%\end{proof}

Furthermore, we consider a \emph{rooted tree $T=(W,E)$ (on $V$)} with root
$\rho_T \in V$ and leaf set $V\subseteq W$ such that the root has degree $ \geq
2$ and each vertex $v\in W\setminus V$ with $v\neq \rho_T$ has degree $ \geq 3$.
We write $x \succeq_T y$, if $x$ lies on the path from $\rho_T$ to $y$. The
\emph{children} of a vertex $x$ are all adjacent vertices $y$ for which $x \succeq_T
y$. Given two leaves $x,y\in V$, their lowest common ancestor $\lca(x,y)$ is the
first vertex that lies on both paths from $x$ to the root and $y$ to the root.
We say that rooted trees $T_1, \dots, T_k$, $k\geq 2$ \emph{are joined under a
new root in the tree $T$} if $T$ is obtained by the following procedure: add a
new root $\rho_T$ and all trees $T_1, \dots, T_k$ to $T$ and connect the root
$\rho_{T_i}$ of each tree $T_i$ to $\rho_T$ with an edge $(\rho_T,\rho_{T_i})$.

\paragraph{\bf Di-cographs.}

Di-cographs generalize the notion of undirected cographs \cite{EHPR:96,CP-06,Corneil:81,BLS:99} and 
are defined  recursively as follows: The single vertex
graph $K_1$ is a di-cograph, and if $G$ and $H$ are di-cographs, then $G\cup H$,
$G\oplus H$, and $G\oslash H$ are di-cographs
\cite{gurski2017dynamic,Corneil:81}. 
Each Di-cograph $G=(V,E)$ is associated with a unique ordered least-resolved tree $T=(W,F)$ (called \emph{cotree}) 
with leaf set $L=V$ and a 
labeling function $t:W\setminus L\to \{0,1,\overrightarrow{1}\}$ defined by\vspace{-0.1in}
\begin{align*}
  t(\lca(x,y)) = \begin{cases}
    0,    & \text{ if } (x,y),(y,x)\notin E \\
    1,     & \text{ if } (x,y),(y,x)\in E   \\
    \Rone, &\text{ otherwise.} 
 \end{cases}
\end{align*}
Since the vertices in the cotree $T$ are
ordered, the label $\overrightarrow{1}$ on some $\lca(x,y)$ of two distinct
leaves $x,y\in L$ means that there is an arc $(x,y) \in E$, while $(y,x)
\notin E$, whenever $x$ is placed to the left of $y$ in $T$.

Some important properties of di-cographs that we need for later reference are given now. 
\begin{lemma}[\cite{Corneil:81,gurski2017dynamic,mcconnell2005linear}]
%		For any di-cograph $G=(V,E)$, there is a binary cotree that represents the structure of 
%		$G$. Moreover, 
		A digraph $G=(V,E)$ is a di-cograph if and only if each induced subgraph of $G$ is a di-cograph.

		Determining whether a digraph is a di-cograph, and if so, computing the corresponding
		cotree can be done in $O(|V|+|E|)$ time.
	 	\label{lem:binary-induced}
\end{lemma}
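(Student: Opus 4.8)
The plan is to treat the two assertions separately. For the hereditary characterization, the backward implication is immediate: since $G = G[V]$ is itself an induced subgraph of $G$, the hypothesis that every induced subgraph is a di-cograph already yields that $G$ is one. The forward implication I would establish by structural induction following the recursive definition. The base case $K_1$ is clear, as its only induced subgraph is $K_1$ itself. For the inductive step, suppose $G = G_1 \circ G_2$ with $\circ \in \{\cup, \oplus, \oslash\}$, and let $W \subseteq V(G)$. Setting $W_i = W \cap V(G_i)$, the key observation is that the three composition operations differ only in the fixed arc pattern they insert between $V(G_1)$ and $V(G_2)$, and this pattern restricts cleanly, so that $G[W] = G_1[W_1] \circ G_2[W_2]$ (treating an empty factor as absent). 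By the induction hypothesis each $G_i[W_i]$ is a di-cograph, and since the class is closed under $\cup$, $\oplus$ and $\oslash$ by definition, $G[W]$ is a di-cograph.

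The second assertion is algorithmic, and I would reduce it to modular decomposition. The central facts I would invoke are that the modular decomposition tree of an arbitrary digraph can be computed in $O(|V|+|E|)$ time, and that a digraph is a di-cograph precisely when this tree contains no prime node; equivalently, every strong module's quotient digraph is degenerate, being either edgeless (corresponding to $\cup$, label $0$), a symmetric complete digraph (corresponding to $\oplus$, label $1$), or a transitive tournament (corresponding to $\oslash$, label $\Rone$). Granting this, the recognition algorithm is: compute the modular decomposition in linear time, then traverse the resulting tree once to test whether every internal node is of one of these three degenerate types. If a prime node is encountered, reject; otherwise relabel each internal node by the corresponding symbol in $\{0,1,\Rone\}$ and fix the left-to-right vertex order at the $\Rone$-nodes according to the associated transitive tournament, yielding the ordered least-resolved cotree $(T,t)$. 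The traversal and relabeling are linear in the size of the decomposition tree, which is $O(|V|)$, so the total running time is $O(|V|+|E|)$.

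I expect the main obstacle to be not the bookkeeping above but the two ingredients imported from the literature: the linear-time construction of the modular decomposition of a digraph and the exactness of the ``no prime node'' characterization of di-cographs. Both are substantive results, and since the statement is cited, I would handle them by reference rather than reproving them here.
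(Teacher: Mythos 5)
Your argument is sound, but there is nothing in the paper to compare it against: the paper states this lemma purely by citation (to Corneil et al., Gurski, and McConnell's linear-time modular decomposition work) and never proves any part of it, so you have in fact done strictly more than the authors. Your treatment of the hereditary characterization is correct and self-contained: the backward direction via $G = G[V]$ is immediate, and in the forward direction the key observation — that each of $\cup$, $\oplus$, $\oslash$ inserts a fixed arc pattern between the two sides, which restricts cleanly so that $G[W] = G_1[W_1] \circ G_2[W_2]$ (dropping an empty factor) — is exactly the right inductive step. Your reduction of the recognition claim to modular decomposition is also the standard route and matches what the cited references actually establish: a digraph is a di-cograph precisely when its modular decomposition tree has no prime node, the degenerate node types (parallel, series, linear/order) correspond to the labels $0$, $1$, $\Rone$, and the canonical decomposition tree is already least-resolved since no degenerate node has a child of the same type, so it is the cotree, with the $\Rone$-children ordered by the transitive tournament on the quotient. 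One small gloss worth tightening: deciding whether an internal node is degenerate requires inspecting that node's quotient digraph, not merely the tree (which has $O(|V|)$ nodes); this is still linear overall, either because standard modular decomposition algorithms output the node types, or because distinct quotient arcs across all nodes correspond to disjoint nonempty sets of arcs of $G$, so the total quotient size is $O(|V|+|E|)$. Since both imported ingredients are substantive theorems, handling them by reference is appropriate — and is precisely what the paper itself does for the entire lemma.
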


\vspace{-0.1in}
\section{Problem Statement}\vspace{-0.1in}

As argued in the introduction and explained in more detail in \cite{HSW:16}, 
the evolutionary history of genes is equivalently described by an ordered 
rooted tree $\mathcal{T}=({T},{t})$ where the leaf set of
${T}$ are genes and $t$ is a map that  %:V^0 \to\{0,1,\Rone\}$
assigns to each non-leaf vertex a unique label $0,1$ or $\Rone$. The labels correspond to the classical
evolutionary events that act on the genes through evolution, namely
\emph{speciation} ($1$), \emph{duplication} ($0$) and \emph{horizontal gene
transfer (HGT)} ($\Rone$). 
The tree $T$ is ordered to represent the inherently asymmetric nature of HGT
events with their unambiguous distinction between the vertically transmitted
``original'' gene and the horizontally transmitted ``copy''. 

Therefore, a given gene tree $\mathcal{T}=({T},{t})$ 
is a cotree of some di-cograph $G(\mathcal{T})=(V,E)$. 
In particular, $\mathcal{T}$ gives rise to 
the following three well-known \emph{homology relations} between genes:\vspace{-0.1in}
\begin{align*}
	&\text{the \emph{orthology-relation}:} &&R_{1}({\mathcal{T}}) = \{(x,y)\mid t(\lca(x,y)) =1\}, \\ 
	&\text{the \emph{paralogy-relation}:}  &&R_{0}({\mathcal{T}}) = \{(x,y)\mid  t(\lca(x,y)) =0\}, \text{ and}\\ 
&\text{the \emph{xenology-relation}: }
					&&R_{\Rone}({\mathcal{T}}) = \{(x,y)\mid t(\lca(x,y))=\Rone \text{ and } x \text{ is left of } y \text{ in } T\}. 
\end{align*} 
Equivalently, 
$R_{1}({\mathcal{T}}) = \{(x,y)\mid (x,y),(y,x)\in E\}$, 
$R_{0}({\mathcal{T}}) = \{(x,y)\mid  (x,y),(y,x)\notin E\}$, 
					$R_{\Rone}({\mathcal{T}}) = \{(x,y)\mid  (x,y)\in E, (y,x)\notin E\}$.
By construction, 
$R_{1}({\mathcal{T}})$ and $R_{0}({\mathcal{T}})$ are symmetric relations, 
while $R_{\Rone}({\mathcal{T}})$ is an anti-symmetric relation. 

In practice, however, one often has only empirical estimates $R_{0},R_{1}$ and $R_{\Rone}$
of some ``true'' relations $R_{0}({\mathcal{T}}),R_{1}({\mathcal{T}})$ and $R_{\Rone}({\mathcal{T}})$, respectively.
%In particular,  different inference methods to obtain the relations $R_{0},R_{1}$ and $R_{\Rone}$
%may lead to different predictions and  one usually obtains partial relations only. 
Moreover, it is often the case that for two distinct leaves 
$x,y$ none of the pairs $(x,y)$ and $(y,x)$ is contained in the
estimates $R_{0},R_{1}$ and $R_{\Rone}$. 

In what follows we always assume that $R_{0},R_{1}$ and $R_{\Rone}$ are subsets of $\Virr$
and pairwisely disjoint. Furthermore $R_0$ and $R_1$ are always 
symmetric relations while $R_{\Rone}$ is always an anti-symmetric relation.

\begin{definition}[Full and Partial Relations]
A set $\mathcal{H} = \{R_{0},R_{1},R_{\Rone}\}$ of relations  is
\emph{full} if $R_0\cup R_1\cup R_{\Rone}^{\mathrm{sym}} = \Virr$
and \emph{partial} if $R_0\cup R_1\cup R_{\Rone}^{\mathrm{sym}} \subseteq  \Virr$. 
\end{definition}
The definition allows considering 
full relations as partial. In other words, all results that will be obtained
for partial relations will also be valid for full relations. 

The question arises under which conditions the given partial relations $R_{0},R_{1}$ and $R_{\Rone}$
are satisfiable, i.e., there is a cotree $\mathcal{T}=(T,t)$ such that 
$R_{1}\subseteq R_{1}({\mathcal{T}})$, $R_{0}\subseteq R_{0}({\mathcal{T}})$ and
$R_{\Rone} \subseteq R_{\Rone}({\mathcal{T}})$, or 
equivalently, there is a di-cograph $\Gs=(V,\Es)$ such that $(R_1\cup R_{\Rone})\subseteq \Es$
and $R_0\cap \Es = \overleftarrow{R_{\Rone}} \cap \Es = \emptyset$.

\begin{definition}[Satisfiable Relations]
A full set $\mathcal{H} = \{R_{0},R_{1},R_{\Rone}\}$ is \emph{satisfiable}, if there is a cotree
$\mathcal{T} = ({T},{t})$ such that $R_{1}=R_{1}({\mathcal{T}})$, $R_{0}=R_{0}({\mathcal{T}})$ and
$R_{\Rone} = R_{\Rone}({\mathcal{T}})$. 

	A partial set $\mathcal{H} = \{R_{0},R_{1},R_{\Rone}\}$ is satisfiable, 
	if there is a full set $\Hs = \{\Rs_{0},\Rs_{1},\Rs_{\Rone}\}$
	that is satisfiable such that $R_0\subseteq \Rs_0$, $R_1\subseteq \Rs_1$, and $R_{\Rone}\subseteq \Rs_{\Rone}$. 

	In this case, we say that $\mathcal{H}$ can be extended to a satisfiable full set $\Hs$
	and that $\mathcal{T}$ \emph{explains} $\mathcal{H}$ and $\Hs$.
\end{definition}

It is easy to see that a full set $\mathcal{H}$ is satisfiable if and only if 
the graph $G(R_{1},R_{\Rone}) = (V,R_1\cup R_{\Rone})$ is a di-cograph. The latter result has already been observed in 
\cite{HSW:16} and is summarized bellow.

\begin{theorem}[\cite{HSW:16}]
	  The full set $\mathcal{H} = \{R_{0},R_{1},R_{\Rone}\}$ is satisfiable if and only
		if $G(R_{1},R_{\Rone}) = (V,R_1\cup R_{\Rone})$ is a di-cograph.
\label{thm:character-satisf}
\end{theorem}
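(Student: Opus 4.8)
The plan is to exploit the direct dictionary between the arc set of a di-cograph and the pair $(R_1,R_{\Rone})$. The crucial observation is that for any cotree $\mathcal{T}=(T,t)$ with associated di-cograph $G(\mathcal{T})=(V,E)$, the equivalent characterization of the three homology relations immediately yields $E = R_1(\mathcal{T})\cup R_{\Rone}(\mathcal{T})$: an arc $(x,y)\in E$ is either reciprocated (label $1$, so $(x,y)\in R_1(\mathcal{T})$) or not (label $\Rone$, so $(x,y)\in R_{\Rone}(\mathcal{T})$), and conversely every pair in $R_1(\mathcal{T})\cup R_{\Rone}(\mathcal{T})$ is an arc. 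This reduces both directions of the equivalence to checking that $R_1\cup R_{\Rone}$ plays exactly the role of the arc set.

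For the forward direction I would simply note that satisfiability supplies a cotree $\mathcal{T}$ with $R_1=R_1(\mathcal{T})$ and $R_{\Rone}=R_{\Rone}(\mathcal{T})$, whence $G(R_1,R_{\Rone})=(V,R_1\cup R_{\Rone})=(V,R_1(\mathcal{T})\cup R_{\Rone}(\mathcal{T}))=G(\mathcal{T})$, and the latter is a di-cograph by definition of a cotree.

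For the converse, assume $G(R_1,R_{\Rone})$ is a di-cograph, let $\mathcal{T}$ be its cotree (Lemma~\ref{lem:binary-induced}), and write $E=R_1\cup R_{\Rone}$. I would then verify $R_1=R_1(\mathcal{T})$, $R_{\Rone}=R_{\Rone}(\mathcal{T})$ and $R_0=R_0(\mathcal{T})$ by a short case analysis resting on the three standing hypotheses: $R_1$ symmetric, $R_{\Rone}$ anti-symmetric, and the three relations pairwise disjoint. For instance, if $(x,y),(y,x)\in E$, then neither can lie in $R_{\Rone}$, since anti-symmetry would force the reverse pair into $R_1$ and then symmetry of $R_1$ would violate disjointness; hence both lie in $R_1$, giving $R_1(\mathcal{T})\subseteq R_1$, while the reverse inclusion is immediate from symmetry of $R_1\subseteq E$. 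An analogous argument settles $R_{\Rone}=R_{\Rone}(\mathcal{T})$. Finally $R_0=R_0(\mathcal{T})$ comes for free: since $\mathcal{H}$ is full, the partitions $\{R_0,R_1,R_{\Rone}^{\mathrm{sym}}\}$ and $\{R_0(\mathcal{T}),R_1(\mathcal{T}),R_{\Rone}(\mathcal{T})^{\mathrm{sym}}\}$ both cover $\Virr$ disjointly and already agree on the $R_1$ and $R_{\Rone}$ blocks, so their remaining blocks must coincide.

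The step I expect to require the most care is precisely this case analysis in the converse direction, where the interplay of symmetry, anti-symmetry, and disjointness must be tracked exactly in order to rule out the ``mixed'' configuration in which one direction of a reciprocated arc pair is mistakenly attributed to $R_{\Rone}$. Nothing here is deep, but it is the only point at which an oversight could creep in.
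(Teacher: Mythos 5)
Your proof is correct. There is, however, no proof in the paper to compare it against: Theorem~\ref{thm:character-satisf} is imported from \cite{HSW:16}, prefaced only by the remark that it ``is easy to see'', and no argument for it appears in the appendix. Your write-up supplies exactly the details that remark elides: the identity $E(G(\mathcal{T})) = R_1(\mathcal{T}) \cup R_{\Rone}(\mathcal{T})$ settles the forward direction immediately, and in the converse the standing hypotheses (symmetry of $R_0$ and $R_1$, anti-symmetry of $R_{\Rone}$, pairwise disjointness) correctly rule out the mixed attribution of a reciprocated arc pair, which is indeed the only place an oversight could occur. One point worth stating explicitly in your final step: to read off $R_0 = R_0(\mathcal{T})$ from the two covers of $\Virr$, you need the blocks of the first cover to be genuinely disjoint, i.e.\ $R_0 \cap \overleftarrow{R_{\Rone}} = \emptyset$ and $R_1 \cap \overleftarrow{R_{\Rone}} = \emptyset$; both follow from the symmetry of $R_0$ and $R_1$ together with their disjointness from $R_{\Rone}$, so this is a cosmetic omission rather than a gap.
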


Due to errors and noise in the data, the graph
$G(R_{1},R_{\Rone})$ is often not a di-cograph. However, in case that 
 $\mathcal{H}$ is partial, it might be possible to extend  
$G(R_{1},R_{\Rone})$ to a di-cograph. 
Moreover, in practice, one often has  additional knowledge about the unknown parts of the relations,
that is, one may know that a pair $(x,y)$ is \emph{not} in
relation $R_i$ for some $i \in \{0,1,\Rone\}$. To be able to model such
forbidden pairs,  we introduce the concept of satisfiability in terms of 
forbidden relations.
\begin{definition}[Satisfiable Relations w.r.t.\ Forbidden Relations] 
	Let $\mathcal{F} = \{F_0, F_1,F_{\Rone}\}$ be a partial set of relations. 
	We say that a full set $\mathcal{H} = \{R_{0},R_{1},R_{\Rone}\}$ is satisfiable w.r.t. $\mathcal{F}$
	if  $\mathcal{H}$ is satisfiable and 
	\[R_0\cap F_0 = R_1\cap F_1 =  R_{\Rone}\cap F_{\Rone} = \emptyset. \] 
	On the other hand, a partial set  $\mathcal{H} = \{R_{0},R_{1},R_{\Rone}\}$ is satisfiable w.r.t. $\mathcal{F}$
	if $\mathcal{H}$ can be extended to a full set $\Hs$ that is satisfiable w.r.t. $\mathcal{F}$.
\end{definition}
Equivalently, a partial set  $\mathcal{H} = \{R_{0},R_{1},R_{\Rone}\}$ is satisfiable w.r.t. $\mathcal{F}$, if
there is a cotree  $\mathcal{T}=({T},{t})$ such that 
$R_0\subseteq R_0(\mathcal{T})$, $R_1\subseteq R_1(\mathcal{T})$,
$R_{\Rone}\subseteq R_{\Rone}(\mathcal{T})$ and 
$R_0(\mathcal{T})\cap F_0 = R_1(\mathcal{T})\cap F_1 =  R(\mathcal{T})_{\Rone}\cap F_{\Rone} = \emptyset$.

\vspace{-0.1in}
\section{Satisfiable Relations}\vspace{-0.1in}
\label{sec:characterize}

In what follows, we consider the problem of deciding whether a partial set $\mathcal{H}$ of relations 
is satisfiable, and if so, finding an extended full set $\Hs$ of $\mathcal{H}$ and the respective cotree
that explains $\mathcal{H}$. 
\emph{Due to space limitation, all proofs are omitted and can be found in the appendix.}
%\subsection{Characterization}

Based on results provided by B\"ocker and Dress \cite{Boeckner:98}, 
the following result has been established for the HGT-free case.
\begin{theorem}[\cite{Lafond2014,HHH+13}] 
Let  $R_{\Rone}=\emptyset$ , $F_0=F_1=\emptyset$ and $F_{\Rone}=\Virr$.
A full set $\mathcal{H} = \{R_{0},R_{1}, R_{\Rone}\}$  is
satisfiable w.r.t. $\mathcal{F} = \{F_0,F_1,F_{\Rone}\}$ if and only if the graph $G=(V,R_1)$ is an undirected cograph.

A partial set  $\mathcal{H} = \{R_{0},R_{1},\emptyset\}$ is satisfiable w.r.t.\ $\mathcal{F}$ if and only if at least one of the
following statements is satisfied:\vspace{-0.1in}
\begin{enumerate}
\item $G=(V,R_1)$ is disconnected and each of its connected components is satisfiable.
\item $G=(V,R_0)$ is disconnected and each of its connected components is satisfiable.
\end{enumerate}
\end{theorem}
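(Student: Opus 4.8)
The plan is to reduce both statements to the classical theory of undirected cographs, exploiting that the choice $F_{\Rone}=\Virr$ forbids every $\Rone$-labelled pair and hence forces any explaining cotree to use only the labels $0$ and $1$. For the \emph{full} case I would invoke \Cref{thm:character-satisf} directly: since $R_{\Rone}=\emptyset$, the constraint $R_{\Rone}(\mathcal{T})\cap F_{\Rone}=\emptyset\cap\Virr=\emptyset$ holds automatically, so $\mathcal{H}$ is satisfiable w.r.t.\ $\mathcal{F}$ iff it is satisfiable, i.e.\ iff $G(R_1,R_{\Rone})=(V,R_1)$ is a di-cograph. As $R_1$ is symmetric, such a di-cograph is built using only $\cup$ and $\oplus$ (the directed join $\oslash$ would create an antisymmetric, $\Rone$-labelled pair), so it is a di-cograph precisely when the undirected graph $(V,R_1)$ is a cograph. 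This settles Part~1.

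For the \emph{partial} case I would first unfold the definitions: any satisfying full extension $\Hs=\{\Rs_0,\Rs_1,\Rs_{\Rone}\}$ must have $\Rs_{\Rone}=\emptyset$ (again because $\Rs_{\Rone}\cap F_{\Rone}=\Rs_{\Rone}$), so the task is exactly to find a cograph $\Gs_1=(V,\Rs_1)$ with $R_1\subseteq\Rs_1$ and $\Rs_1\cap R_0=\emptyset$. The two cases then mirror the two possible labels of the cotree root, relying on the classical facts that a cograph on $\geq 2$ vertices is either a union (root label $0$, hence disconnected) or a join (root label $1$, hence with disconnected complement), and that cographs are closed under complementation. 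For the direction ``$\Leftarrow$'', assume $(V,R_1)$ is disconnected with components $C_1,\dots,C_k$ ($k\geq 2$), each $\mathcal{H}[C_i]$ being satisfiable via a cotree $\mathcal{T}_i$ with labels in $\{0,1\}$; I would join the $\mathcal{T}_i$ under a new root labelled $0$ to form $\mathcal{T}$. Since every $R_1$-pair lies inside a single $C_i$, intra-component lca-labels are preserved and so $R_1\subseteq\bigcup_i R_1(\mathcal{T}_i)\subseteq R_1(\mathcal{T})$; every cross-component pair gets $\lca=\rho_{\mathcal{T}}$ and so lands in $R_0(\mathcal{T})$, while intra-component $R_0$-pairs are handled by the $\mathcal{T}_i$, giving $R_0\subseteq R_0(\mathcal{T})$; finally $R_{\Rone}(\mathcal{T})=\emptyset$ respects $F_{\Rone}$. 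Condition~2 is identical after interchanging $R_0\leftrightarrow R_1$ and $\cup\leftrightarrow\oplus$ (root labelled $1$), which is legitimate exactly because the complement of a cograph is a cograph.

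For ``$\Rightarrow$'', suppose $\mathcal{H}$ is satisfiable and let $\mathcal{T}$ explain a full extension $\Hs$, so $\Gs_1=(V,\Rs_1)$ is a cograph with $R_1\subseteq\Rs_1$ and $R_0\subseteq\Virr\setminus\Rs_1$. If the root of $\mathcal{T}$ carries label $0$, then $\Gs_1$ is the disjoint union over the root's children $V_1,\dots,V_m$; as $R_1\subseteq\Rs_1$ carries no edge between distinct $V_j$, the graph $(V,R_1)$ is disconnected. Restricting $\mathcal{T}$ to any connected component $C$ of $(V,R_1)$ yields, by \Cref{lem:binary-induced}, a cotree explaining $\Hs[C]\supseteq\mathcal{H}[C]$, so each component is satisfiable, which is condition~1. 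A root labelled $1$ gives condition~2 by the complementary argument applied to $\overline{\Gs_1}$.

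I expect the main obstacle to lie in the ``$\Rightarrow$'' direction, namely the clean hand-off between the two layers of the recursion: one must argue not merely that the root label dictates \emph{which} of $(V,R_1)$, $(V,R_0)$ is disconnected, but that the induced sub-instances on the $R_1$- (resp.\ $R_0$-)\emph{components}---rather than on the coarser top-level split $\{V_j\}$---remain satisfiable. The restriction-to-components argument via \Cref{lem:binary-induced} is what bridges this gap. One should also dispatch the base case $|V|=1$ separately, since a single vertex is vacuously satisfiable yet meets neither disconnectivity condition.
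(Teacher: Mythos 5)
Your proof is correct, and it takes essentially the same route as the paper: the paper itself states this theorem as a citation from the literature, but its own proof of the generalization (Thm.~\ref{thm:sat}) proceeds exactly as you do---joining component cotrees under a new root labelled $0$ or $1$ for sufficiency, and for necessity reading off the root label of an explaining cotree and restricting to (connected) components via Lemma~\ref{lem:binary-induced}, an argument the paper packages as Lemma~\ref{lem:satisf-partition}. Your supporting observations---that $F_{\Rone}=\Virr$ forces any extension to have $\Rs_{\Rone}=\emptyset$ (so the analogue of Rule~(3) is vacuous, $G_{\Rone}$ being complete) and that $|V|=1$ must be dispatched separately---are exactly how the paper's general characterization specializes to this HGT-free case.
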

To generalize the latter result for non-empty relations $R_{\Rone}, F_0$ and $F_1$
and to allow pairs to be added to $R_{\Rone}$, i.e., $F_{\Rone}\neq\Virr$,
we need the following result.
\begin{lemma}\label{lem:satisf-partition}
	A partial set $\mathcal{H} = \{ R_0,R_1, R_{\Rone}\}$ is satisfiable w.r.t.\ the
	set of forbidden relations $\mathcal{F} = \{ F_1, F_0, F_{\Rone} \}$ 
	if and only if for any partition 	$\{C_1, \dots,
	C_k\}$ of $V$ the set 
	$\mathcal{H}[C_i]$ is satisfiable w.r.t.\ $\mathcal{F}[C_i]$, $1\leq i\leq k$.
\end{lemma}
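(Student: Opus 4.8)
The plan is to prove the two implications separately. The backward direction is immediate and carries no content, while the forward direction is where the structural work lies; I would derive it from the hereditary behaviour of di-cographs together with the characterisation of full satisfiable sets in Theorem~\ref{thm:character-satisf}.

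For the backward direction I would simply invoke the hypothesis on the trivial partition. Taking $k=1$ and $C_1 = V$ gives a valid partition of $V$, and by assumption $\mathcal{H}[V]$ is satisfiable w.r.t.\ $\mathcal{F}[V]$. Since $\mathcal{H}[V] = \mathcal{H}$ and $\mathcal{F}[V] = \mathcal{F}$, this is precisely the assertion that $\mathcal{H}$ is satisfiable w.r.t.\ $\mathcal{F}$.

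For the forward direction, suppose $\mathcal{H}$ is satisfiable w.r.t.\ $\mathcal{F}$. By definition there is a \emph{full} set $\Hs = \{\Rs_0,\Rs_1,\Rs_{\Rone}\}$ that is satisfiable w.r.t.\ $\mathcal{F}$ with $R_i \subseteq \Rs_i$ and $\Rs_i \cap F_i = \emptyset$ for each $i\in\{0,1,\Rone\}$. By Theorem~\ref{thm:character-satisf}, the graph $\Gs = (V, \Rs_1\cup \Rs_{\Rone})$ is a di-cograph. Now fix an arbitrary partition $\{C_1,\dots,C_k\}$ of $V$ and an arbitrary block $C \coloneqq C_i$. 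I would propose $\Hs[C] = \{\Rs_0[C], \Rs_1[C], \Rs_{\Rone}[C]\}$ as the full set on $C$ witnessing satisfiability of $\mathcal{H}[C]$ w.r.t.\ $\mathcal{F}[C]$, and then verify the four required conditions. The engine is Lemma~\ref{lem:binary-induced}: since $\Gs$ is a di-cograph, its induced subgraph $\Gs[C]$ is again a di-cograph. Using that restriction to $C$ distributes over the set operations defining the relations, i.e.\ $(R\cup R')[C] = R[C]\cup R'[C]$, $(R\cap R')[C] = R[C]\cap R'[C]$, and $(R^{\mathrm{sym}})[C] = (R[C])^{\mathrm{sym}}$, I would deduce: $(i)$ $\Gs[C] = (C, \Rs_1[C]\cup \Rs_{\Rone}[C])$ is a di-cograph, so $\Hs[C]$ is satisfiable by Theorem~\ref{thm:character-satisf}; $(ii)$ $\Hs[C]$ is full on $C$, since $\Rs_0[C]\cup \Rs_1[C]\cup (\Rs_{\Rone}[C])^{\mathrm{sym}} = (\Rs_0\cup \Rs_1\cup \Rs_{\Rone}^{\mathrm{sym}})[C] = \Virr[C]$, the full irreflexive relation on $C$; $(iii)$ $R_i[C]\subseteq \Rs_i[C]$ because $R_i\subseteq \Rs_i$; and $(iv)$ $\Rs_i[C]\cap F_i[C] = (\Rs_i\cap F_i)[C] = \emptyset$. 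Hence $\Hs[C]$ is a full set that is satisfiable w.r.t.\ $\mathcal{F}[C]$ and extends $\mathcal{H}[C]$, so $\mathcal{H}[C]$ is satisfiable w.r.t.\ $\mathcal{F}[C]$. Since the block $C_i$ and the partition were arbitrary, the claim follows for every block of every partition.

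The main obstacle is not conceptual but is a matter of careful bookkeeping: confirming that each of the defining properties (\emph{full}, \emph{extends}, \emph{avoids the forbidden pairs}, \emph{satisfiable}) is preserved under the restriction $[C]$. The only genuinely structural ingredient is the hereditary property of di-cographs supplied by Lemma~\ref{lem:binary-induced}; everything else reduces to the elementary observation that $[C]$ commutes with union, intersection, and symmetric extension, together with the fact that $\Virr[C]$ is exactly the full irreflexive relation on the vertex set $C$.
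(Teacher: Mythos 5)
Your proof is correct and takes essentially the same route as the paper's: both directions hinge on the same ingredients, namely applying the hereditary property of di-cographs (Lemma~\ref{lem:binary-induced}) to the di-cograph $\Gs=(V,\Rs_1\cup\Rs_{\Rone})$ of the full extension $\Hs$ to obtain the witnessing full sets $\Hs[C_i]$, and invoking the trivial partition $\{V\}$ for the converse. Your write-up merely makes explicit the appeal to Theorem~\ref{thm:character-satisf} and the bookkeeping (fullness, extension, disjointness from $\mathcal{F}[C]$, commutation of restriction with the set operations) that the paper treats more tersely.
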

Lemma \ref{lem:satisf-partition} characterizes satisfiable partial sets $\mathcal{H} =\{ R_0,
R_1, R_{\Rone}\}$ in terms of a partition $\{C_1, \dots, C_k\}$ of $V$ and the
induced sub-relations in  $\mathcal{H}[C_i]$. 
In what follows, we say that \emph{a component} $C$ of a graph \emph{ is satisfiable}
if the set $\mathcal{H}[C]$ is satisfiable w.r.t.\ $\mathcal{F}[C]$.

We are now able to state the main result. 
\begin{theorem} \label{thm:sat}
Let $\mathcal{H} = \{ R_0,R_1, R_{\Rone}\}$ be a partial set. 
Then, $\mathcal{H}$ is satisfiable w.r.t.\  the set of forbidden relations $\mathcal{F} =  \{ F_0,
	F_1,F_{\Rone} \}$ if and only if 
	$R_0 \cap F_0 = R_1 \cap F_1 = R_{\Rone} \cap F_{\Rone} = \emptyset$ and at least one of the following statements hold: \vspace{-0.1in}
\begin{description}
	\item[] \emph{Rule (0):\ } $|V| = 1$. 
	\item[] \emph{Rule (1):\ }
		(a) $G_0 \coloneqq (V,R_1\cup R_{\Rone}\cup F_0)$ is disconnected and (b) each connected component of $G_0$ is satisfiable. 
	\item[] \emph{Rule (2):\ } (a) $G_1 \coloneqq (V,R_0\cup R_{\Rone}\cup F_{1})$ is disconnected and (b) each connected component of $G_1$ is satisfiable.  
	\item[] \emph{Rule (3):\ } 
	(a) $G_{\Rone} \coloneqq (V,R_0\cup R_1\cup R_{\Rone} \cup \overleftarrow{F_{\Rone}})$ 
		  contains more than one strongly connected
			component, and (b) each strongly connected component of $G_{\Rone}$ is satisfiable. 
%			In this case, the quotient graph
%			$R_{\ortho,\para,\xeno, \overleftarrow{\fxeno}} / \{C_1,\dots,C_k\}$ is a DAG.
\end{description}
\end{theorem}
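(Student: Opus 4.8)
The plan is to prove both directions by analysing the label of the root of an explaining cotree, matching the three non-trivial rules to the three di-cograph operations: Rule~(1) to the union $\cup$ (root label $0$), Rule~(2) to the join $\oplus$ (root label $1$), and Rule~(3) to the directed join $\oslash$ (root label $\Rone$). The disjointness conditions $R_0\cap F_0 = R_1\cap F_1 = R_{\Rone}\cap F_{\Rone}=\emptyset$ are immediate in both directions: they are necessary because any extending satisfiable full set $\Hs$ satisfies $R_i\subseteq \Rs_i$ and $\Rs_i\cap F_i=\emptyset$, and they are assumed as part of the right-hand side.

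For the \emph{if} direction I would argue by direct construction. Assuming the disjointness conditions and, say, Rule~(1), let $C_1,\dots,C_k$ with $k\geq 2$ be the connected components of $G_0$, each satisfiable with an explaining cotree $\mathcal{T}_i$. I would join the $\mathcal{T}_i$ under a new root labelled $0$, yielding a cotree $\mathcal{T}$ whose di-cograph is $G(\mathcal{T}_1)\cup\dots\cup G(\mathcal{T}_k)$. Every cross pair then lies in $R_0(\mathcal{T})$; since no edge of $G_0=(V,R_1\cup R_{\Rone}\cup F_0)$ crosses the partition, no cross pair belongs to $R_1$, $R_{\Rone}$, or $F_0$, so the containments $R_i\subseteq R_i(\mathcal{T})$ and the conditions $R_i(\mathcal{T})\cap F_i=\emptyset$ hold on cross pairs, while inside each $C_i$ they hold by the choice of $\mathcal{T}_i$. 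Rule~(2) is the symmetric argument with a root labelled $1$ and the join $\oplus$.

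Rule~(3) is where the real work lies, and I expect it to be the main obstacle. Here I would take the strongly connected components $S_1,\dots,S_m$ of $G_{\Rone}=(V,R_0\cup R_1\cup R_{\Rone}\cup\overleftarrow{F_{\Rone}})$, fix a topological order of its condensation, and join the corresponding explaining cotrees $\mathcal{T}_1,\dots,\mathcal{T}_m$ under a new root labelled $\Rone$ in that order, so that the associated di-cograph is the directed join $G(\mathcal{T}_1)\oslash\dots\oslash G(\mathcal{T}_m)$. The key claim to establish is that every arc of $G_{\Rone}$ between distinct components runs \emph{forward} in this order: symmetric $R_0$- and $R_1$-pairs would force arcs in both directions and hence lie inside a single SCC; an $R_{\Rone}$-arc $(x,y)$ already points forward; and a reversed forbidden arc $(y,x)\in\overleftarrow{F_{\Rone}}$ points forward precisely because $(x,y)\in F_{\Rone}$ forbids $x$ from preceding $y$. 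Granting this, every forward cross pair lies in $R_{\Rone}(\mathcal{T})$ with the correct orientation, and a short case check yields $R_i\subseteq R_i(\mathcal{T})$ and $R_i(\mathcal{T})\cap F_i=\emptyset$; in particular $R_{\Rone}(\mathcal{T})\cap F_{\Rone}=\emptyset$ holds because any violating cross pair would induce a backward arc in the condensation, contradicting the topological order.

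For the \emph{only if} direction I would fix a cotree $\mathcal{T}$ explaining $\mathcal{H}$ w.r.t.\ $\mathcal{F}$. If $|V|=1$ we are in Rule~(0); otherwise the root has degree $\geq 2$ and carries a label in $\{0,1,\Rone\}$, inducing a top-level partition $\{V_1,\dots,V_k\}$ with $k\geq 2$. If the label is $0$, all cross pairs lie in $R_0(\mathcal{T})$, so none can lie in $R_1$, $R_{\Rone}$ or $F_0$; hence no edge of $G_0$ crosses the partition, $G_0$ is disconnected, and restricting $\mathcal{T}$ to each component shows via Lemma~\ref{lem:satisf-partition} that each component is satisfiable, giving Rule~(1). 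The label $1$ gives Rule~(2) symmetrically. If the label is $\Rone$, I would run the same four-case analysis to show every cross arc of $G_{\Rone}$ runs forward in the part order, so that no SCC can span two parts; consequently each SCC is contained in a single $V_i$, there are at least two SCCs, and restricting $\mathcal{T}$ to each SCC witnesses its satisfiability, giving Rule~(3).
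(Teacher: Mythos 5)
Your proposal is correct and takes essentially the same route as the paper's proof: both directions argue by case analysis on the root label, sufficiency is obtained by joining the component cotrees under a new root (with the topological order of the condensation of $G_{\Rone}$ and the reversed forbidden arcs $\overleftarrow{F_{\Rone}}$ handling Rule~(3) exactly as in the paper), and necessity follows from the root label forcing $G_0$, $G_1$ or $G_{\Rone}$ to decompose, together with Lemma~\ref{lem:satisf-partition} for the components. The only difference is presentational: you verify directly that the constructed cotree explains $\mathcal{H}$ w.r.t.\ $\mathcal{F}$, while the paper first builds the explicit full extension $\Hs$ and checks its disjointness from $\mathcal{F}$, and in the converse $\Rone$-case the paper passes through the supergraph $\Gs_{\Rone}$ rather than arguing on $G_{\Rone}$ directly.
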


Note, the notation $G_0$, $G_1$ and $G_{\Rone}$ in Thm.\ \ref{thm:sat}
is chosen because if $G_0$ (resp.\ $G_1$ and $G_{\Rone}$) satisfies Rule (1) (resp.\ (2) and (3)), 
then the root of the cotree that explains $\mathcal{H}$ is labeled ``$0$''
(resp.\ ``$1$'' and ``$\Rone$'').

\begin{figure}[t]
\centering     %%% not \center
	\includegraphics[width=.8\textwidth]{./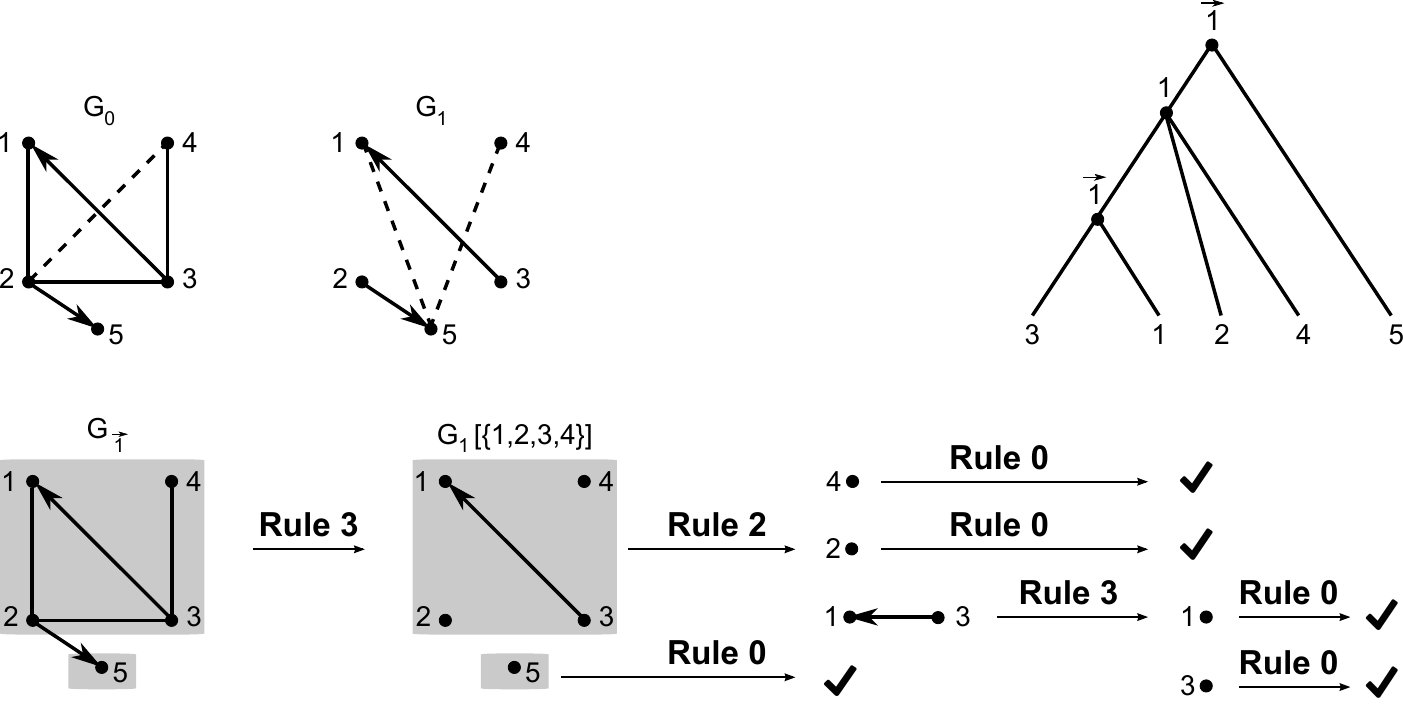}
\caption{Consider the partial relations $R_0 = \emptyset$,
         $R_1=\{[1,2],[2,3],[3,4]\}$, and $R_{\protect\Rone}=\{(3,1),(2,5)\}$,
         and the forbidden relations $F_0 = \{[2,4]\}$, $F_1 = \{[1,5],[4,5]\}$
         and $F_{\protect\Rone}=\emptyset$. Here $[x,y]\in R$ means that both
         $(x,y)$ and $(y,x)$ are contained in $R$. Left, the three graphs
         $G_0,G_1$ and $G_{\protect\Rone}$ as defined in Thm.\ \ref{thm:sat} are
         shown (arrows are omitted for symmetric arcs and dashed arcs correspond
         to forbidden pairs). Both $G_0$ and $G_1$ are connected. However,
         $G_{\protect\Rone}$ contains the two strongly connected components 
				 $C_1=\{1,2,3,4\}$ and $C_2=\{5\}$
         (highlighted with gray rectangles). Thus, Rule (3) can be applied. 
				  The graphs $G_0[C_1]$ and $G_{\protect\Rone}[C_1]$ are connected. 
					However, since $G_1[C_1]$ is disconnected one can apply Rule (2).
					 For the graph $G[C_2]$ only Rule (0) can be applied. 
				 The
         workflow in the lower part shows stepswise application of allowed rules
         on the respective components. The final cotree $(T,t)$ that explains
         the full set $\Hs= \{ \Rs_0 = \emptyset,
         R_1=\{[1,2],[2,3],[3,4],[1,4],[2,4]\},R_{\protect\Rone}=\{(3,1),(1,5),(2,5),(3,5),(4,5)\}$
         is shown in the upper right part.\vspace{-0.1in}
}
\label{fig:w-exmpl}
\end{figure}

In the absence of  forbidden relations,  Thm.\  \ref{thm:sat} immediately implies 
\begin{corollary}
The partial set $\mathcal{H} = \{R_0,R_1,R_{\Rone}\}$ is satisfiable if and only if
at least one of the following statements hold \vspace{-0.1in}
\begin{description}
\item[\ \textnormal{Rule (0):}] $|V| = 1$
	\item[\ \textnormal{Rule (1):}] (a) $G_0 \coloneqq (V,R_1\cup R_{\Rone})$ is disconnected and (b) each connected component of $G_0$ is satisfiable. 
	\item[\ \textnormal{Rule (2):}] (a) $G_1 \coloneqq (V,R_0\cup R_{\Rone})$ is disconnected and (b) for each connected component of $G_1$ is satisfiable.  
	\item[\ \textnormal{Rule (3):}] 
	(a) $G_{\Rone} \coloneqq (V,R_0\cup R_1\cup R_{\Rone})$ 
		  contains more than one strongly connected
			component, and (b) each strongly connected component of $G_{\Rone}$ is satisfiable. 
\end{description} \vspace{-0.07in}
\label{cor:sat}
\end{corollary}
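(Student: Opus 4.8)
The plan is to obtain Corollary~\ref{cor:sat} as the special case of Theorem~\ref{thm:sat} in which all forbidden relations are empty, so the only real work is the bookkeeping observation that ``satisfiable'' and ``satisfiable w.r.t.\ the empty forbidden set'' genuinely coincide. Concretely, I would first check that for a partial set $\mathcal{H}$, being satisfiable (in the sense without forbidden relations) is literally the same as being satisfiable w.r.t.\ the degenerate set $\mathcal{F}=\{F_0,F_1,F_{\Rone}\}$ with $F_0=F_1=F_{\Rone}=\emptyset$. Indeed, a full set $\Hs$ is satisfiable w.r.t.\ this $\mathcal{F}$ iff $\Hs$ is satisfiable and $\Rs_0\cap\emptyset=\Rs_1\cap\emptyset=\Rs_{\Rone}\cap\emptyset=\emptyset$, and the latter three conditions hold vacuously; hence extending $\mathcal{H}$ to a satisfiable full set w.r.t.\ $\mathcal{F}$ is the same as extending it to a satisfiable full set in the forbidden-free sense. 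Therefore Theorem~\ref{thm:sat} applies verbatim with this choice of $\mathcal{F}$.

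Next I would simplify each ingredient of Theorem~\ref{thm:sat} under $F_0=F_1=F_{\Rone}=\emptyset$. The consistency precondition $R_0\cap F_0=R_1\cap F_1=R_{\Rone}\cap F_{\Rone}=\emptyset$ becomes vacuous and can be dropped from the statement, which is precisely why it does not appear in the corollary. The three auxiliary graphs collapse accordingly: $G_0=(V,R_1\cup R_{\Rone}\cup F_0)$ reduces to $(V,R_1\cup R_{\Rone})$, likewise $G_1$ reduces to $(V,R_0\cup R_{\Rone})$, and since $\overleftarrow{F_{\Rone}}=\overleftarrow{\emptyset}=\emptyset$ the graph $G_{\Rone}=(V,R_0\cup R_1\cup R_{\Rone}\cup\overleftarrow{F_{\Rone}})$ reduces to $(V,R_0\cup R_1\cup R_{\Rone})$. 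These are exactly the graphs appearing in Rules~(1)--(3) of the corollary, so the four rules of Theorem~\ref{thm:sat} translate word for word into those of Corollary~\ref{cor:sat}, with Rule~(0) unchanged.

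Finally I would confirm that the recursive clause ``each (strongly) connected component is satisfiable'' is inherited unchanged: for a vertex subset $C$ the restricted forbidden set $\mathcal{F}[C]$ equals $\{\emptyset[C],\emptyset[C],\emptyset[C]\}=\{\emptyset,\emptyset,\emptyset\}$, so ``$C$ is satisfiable w.r.t.\ $\mathcal{F}[C]$'' is again the same as ``$C$ is satisfiable'' in the forbidden-free sense, and the recursion in the corollary is well defined. Assembling these substitutions yields the claimed equivalence directly from Theorem~\ref{thm:sat}. There is no genuine obstacle here, as the corollary carries no content beyond the specialization; the one point that warrants care is simply verifying that the two notions of satisfiability coincide at the base case, at the top level, and under restriction to components, which is exactly what the first and third paragraphs establish.
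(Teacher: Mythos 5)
Your proposal is correct and matches the paper's own route exactly: the paper derives Corollary~\ref{cor:sat} as the immediate specialization of Theorem~\ref{thm:sat} to $F_0=F_1=F_{\Rone}=\emptyset$, which is precisely your argument. You merely spell out the bookkeeping (vacuous precondition, collapsed graphs $G_0$, $G_1$, $G_{\Rone}$, and invariance of the notion of satisfiability under restriction to components) that the paper leaves implicit behind the word ``immediately.''
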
 
Thm.\  \ref{thm:character-satisf} and \ref{thm:sat} together with Cor.\ \ref{cor:sat} imply the following characterization of di-cographs. 
\begin{corollary}
$G=(V,E)$ is a di-cograph if and only if either \vspace{-0.07in}%\vspace{-0.1in}
\begin{description}
\item[\textnormal{(0)}] $|V| = 1$
\item[\textnormal{(1)}] $G$ is disconnected and each of its connected components are di-cograph.
\item[\textnormal{(2)}] $\overline{G}$ is disconnected and each of its connected components are di-cographs.
\item[\textnormal{(3)}] $G$ and $\overline{G}$ are connected, but  $G$ 
												contains more than one strongly connected component, each of them is a di-cograph. 
\end{description} \vspace{-0.1in}
\label{thm:cographNew}
\end{corollary}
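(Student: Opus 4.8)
The plan is to obtain the statement as a direct consequence of Theorem~\ref{thm:character-satisf} and Corollary~\ref{cor:sat}. Given a digraph $G=(V,E)$, I first associate with it the full set $\mathcal{H}(G)=\{R_0,R_1,R_{\Rone}\}$ defined by $R_1=\{(x,y)\mid (x,y),(y,x)\in E\}$, $R_0=\{(x,y)\mid (x,y),(y,x)\notin E\}$ and $R_{\Rone}=\{(x,y)\mid (x,y)\in E,(y,x)\notin E\}$. These are pairwise disjoint and carry the required symmetry, so $\mathcal{H}(G)$ is a full set, and by construction $R_1\cup R_{\Rone}=E$, whence $G(R_1,R_{\Rone})=G$. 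Theorem~\ref{thm:character-satisf} then supplies the key bridge: $\mathcal{H}(G)$ is satisfiable if and only if $G$ is a di-cograph. Applying the same identity to an induced subgraph gives $\mathcal{H}(G)[C]=\mathcal{H}(G[C])$, so that ``$C$ is satisfiable'' is equivalent to ``$G[C]$ is a di-cograph''; this is exactly the recursive ingredient needed to match the clauses (b) of the rules with the clauses ``each component is a di-cograph''.

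It then remains to run Corollary~\ref{cor:sat} on $\mathcal{H}(G)$ and rewrite each of its four rules as a condition on $G$ and $\overline{G}$. Rule~(0) is literally (0). For Rule~(1) we have $G_0=(V,R_1\cup R_{\Rone})=(V,E)=G$, so Rule~(1) is exactly (1). For Rule~(2) I would observe that $(x,y)\in R_0\cup R_{\Rone}$ holds precisely when $(y,x)\notin E$; consequently $G_1=(V,R_0\cup R_{\Rone})$ and $\overline{G}$ have the same underlying undirected graph and hence the same (weakly) connected components. Combined with the standard fact that the complement of a di-cograph is a di-cograph and that complementation commutes with taking induced subgraphs (so $\overline{G}[C]=\overline{G[C]}$ is a di-cograph iff $G[C]$ is), this turns Rule~(2) into condition~(2).

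The substance of the argument is Rule~(3). Here $G_{\Rone}=(V,R_0\cup R_1\cup R_{\Rone})=(V,E\cup R_0)$, i.e. $G_{\Rone}$ arises from $G$ by inserting \emph{both} arcs for every mutually non-adjacent pair. Thus the strong components of $G_{\Rone}$ are unions of those of $G$ and may in general be strictly coarser (for instance the directed path $1\to2\to3$ has three strong components, while the associated $G_{\Rone}$ has only one). The crux, and the step I expect to be the main obstacle, is to show that in the regime relevant to condition~(3) -- when both $G$ and $\overline{G}$ are connected -- the strong components of $G$ and $G_{\Rone}$ coincide, so that ``$G_{\Rone}$ has more than one strong component'' may be replaced by ``$G$ has more than one strong component''. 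I would prove this through the directed-join decomposition: when $G$ and $\overline{G}$ are connected, a di-cograph decomposes as a chain $G^{(1)}\oslash\cdots\oslash G^{(k)}$ of directed joins, which forces every pair of vertices in distinct strong components to carry an $R_{\Rone}$-arc and never an $R_0$-pair. Hence no $R_0$-arc joins two strong components of $G$, inserting the $R_0$-arcs cannot merge them, and the strong components of $G$ and $G_{\Rone}$ agree.

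Assembling the four translations shows that the disjunction of Rules~(0)--(3) for $\mathcal{H}(G)$ is equivalent to the disjunction of~(0)--(3) for $G$, and combining this with the bridge ``$\mathcal{H}(G)$ satisfiable $\iff$ $G$ a di-cograph'' gives the result. The delicate point is to keep both directions of Rule~(3) honest: the forward direction exploits that a di-cograph's condensation is a transitive tournament (yielding the $R_0$-free cross structure and hence the strong-component correspondence), while the converse must ensure that ``each strong component is a di-cograph'' is actually reassembled by a genuine directed join -- equivalently, that the condensation of $G$ is a transitive tournament -- which is precisely the information captured by phrasing the decomposition through $G_{\Rone}$ rather than through $G$ itself, and which therefore deserves explicit care when one states condition~(3) in terms of the strong components of $G$.
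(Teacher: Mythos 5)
Your overall plan --- the dictionary $\mathcal{H}(G)$, the bridge via Theorem~\ref{thm:character-satisf}, and the translations of Rules (0)--(2) --- is exactly the derivation the paper intends (the paper prints no separate proof; it declares the corollary immediate from Thm.~\ref{thm:sat} and Cor.~\ref{cor:sat}). The genuine gap is the lemma on which you rest Rule~(3): it is false that, when $G$ and $\overline{G}$ are connected, the strong components of $G$ and of $G_{\Rone}=(V,E\cup R_0)$ coincide. Your own example already refutes this in the ``relevant regime'': for the path $1\to 2\to 3$ both $G$ and $\overline{G}$ are (weakly) connected, yet $G$ has three strong components and $G_{\Rone}$ has one. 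Adding the hypothesis that $G$ is a di-cograph does not rescue the lemma either: for $G=(K_1\cup K_1)\oslash K_1$, say $V=\{a,b,c\}$ and $E=\{(a,c),(b,c)\}$, the graphs $G$ and $\overline{G}$ are connected and $G$ is a di-cograph, but the strong components of $G$ are the three singletons while those of $G_{\Rone}$ are $\{a,b\}$ and $\{c\}$. The flaw is your claim that a directed-join chain ``forces every pair of vertices in distinct strong components to carry an $R_{\Rone}$-arc and never an $R_0$-pair'': strong components only \emph{refine} the factors $G^{(i)}$ of the chain, and two strong components lying inside the same factor (as $\{a\}$ and $\{b\}$ above) can perfectly well form an $R_0$-pair. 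For the same reason your parenthetical claim that the condensation of such a di-cograph is a transitive tournament is also false.

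Moreover, this is not a gap that more care can close, because the equivalence you are trying to prove is false as literally stated: the path $1\to2\to3$ satisfies condition (3) of the corollary ($G$ and $\overline{G}$ connected, three strong components, each a single-vertex di-cograph), but it is not a di-cograph, since none of $\cup$, $\oplus$, $\oslash$ can serve as its top-level operation. What Cor.~\ref{cor:sat} actually yields through your (correct) dictionary is the variant of (3) in which ``strongly connected component'' refers to $G_{\Rone}=(V,E\cup R_0)$, i.e.\ to the digraph obtained from $G$ by inserting \emph{both} arcs between every non-adjacent pair: $G$ and $\overline{G}$ are connected, $G_{\Rone}$ has more than one strong component, and each such component induces a di-cograph in $G$. (For di-cographs these components are precisely the factors of the finest $\oslash$-decomposition.) With that reading your translation goes through verbatim and no coincidence lemma is needed. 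So the correct conclusion of your analysis is not a proof of the statement as printed, but the observation that Rule~(3) does \emph{not} transcribe into the printed condition~(3); the discrepancy you sensed is real, and it is a defect of the corollary's formulation --- hidden by the paper's treating it as immediate --- rather than a missing step on your side.
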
 
Thm.\  \ref{thm:sat} gives a characterization of satisfiable partial sets
$\mathcal{H} = \{ R_0,R_1, R_{\Rone}\}$ with respect to some forbidden sets
$\mathcal{F} =  \{ F_0, F_1,F_{\Rone} \}$. In the appendix, it is shown that
the order of applied rules has no influence on the correctness of the
algorithm. Clearly, Thm.\  \ref{thm:sat} immediately provides an algorithm
for the recognition of satisfiable sets $\mathcal{H}$, which is summarized in Alg.\ \ref{alg:build-cotree}.
Fig.\  \ref{fig:w-exmpl} shows an example of the application of Thm.\  \ref{thm:sat}
and Alg.\ \ref{alg:build-cotree}.
\begin{restatable}{theorem}{thmcorrectness}
	Let $\mathcal{H} = \{ R_0, R_1, R_{\Rone} \}$ be a partial set, and 
	$\mathcal{F} = \{ F_0, F_1, F_{\Rone} \}$ a forbidden set. Additionally, let
	$n = |V|$ and $m =|R_0 \cup R_1 \cup R_{\Rone} \cup F_0 \cup F_1 \cup F_{\Rone}|$.
	Then, Alg.\   \ref{alg:build-cotree} runs in $O(n^2 + nm)$ time and either: \vspace{-0.1in}
	\begin{enumerate}
		\item[(i)]	outputs a cotree $(T,t)$ that explains $\mathcal{H}$; or  
		\item[(ii)] outputs the statement ``$\mathcal{H}$ is not satisfiable w.r.t.\  $\mathcal{F}$''.
	\end{enumerate}
\label{thm:algo}
\end{restatable}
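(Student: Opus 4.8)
The plan is to establish correctness and the running time together, by induction on $|W|$ along the recursion tree that Alg.~\ref{alg:build-cotree} builds; I read this algorithm as the direct recursive implementation of Thm.~\ref{thm:sat}. On a subproblem with vertex set $W$ it first verifies the disjointness condition $R_0\cap F_0=R_1\cap F_1=R_{\Rone}\cap F_{\Rone}=\emptyset$ restricted to $W$; if $|W|=1$ it returns a leaf (Rule~(0)); otherwise it searches for a rule among (1)(a),~(2)(a),~(3)(a) whose (strong) connectivity condition holds, commits to the first one found, creates a fresh root labelled $0$, $1$, or $\Rone$ accordingly, and recurses on the resulting (strongly) connected components; it reports ``not satisfiable'' as soon as the disjointness condition fails, or a subproblem with $|W|>1$ admits no applicable (a)-condition, or some recursive call fails.

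For correctness I would show by induction on $|W|$ that the algorithm returns a cotree explaining $\mathcal{H}[W]$ exactly when $\mathcal{H}[W]$ is satisfiable w.r.t.\ $\mathcal{F}[W]$, and reports failure otherwise. The base case $|W|=1$ is immediate. For the step, Thm.~\ref{thm:sat} asserts that $\mathcal{H}[W]$ is satisfiable iff the disjointness condition holds and some rule holds in \emph{both} its (a)- and (b)-parts; by the induction hypothesis the recursive calls correctly decide each (b)-part. Committing to a single applicable rule is then justified by the order-invariance established in the appendix: whenever a rule's (a)-condition holds at a satisfiable instance, that rule's components are themselves satisfiable. Hence a failing recursive call under a committed rule certifies that $\mathcal{H}[W]$ is unsatisfiable, so no backtracking between rules is needed, and when no (a)-condition holds (with $|W|>1$) Thm.~\ref{thm:sat} directly gives unsatisfiability. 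Each (a)-condition splits $W$ into at least two nonempty proper subsets, so every recursive call strictly decreases $|W|$, guaranteeing termination; finally, the labelled tree explains $\mathcal{H}$ because the labels realise the $\lca$-definition of $R_0(\mathcal{T}),R_1(\mathcal{T}),R_{\Rone}(\mathcal{T})$ and the discarded ``cross'' pairs are consistent with the chosen root label by Thm.~\ref{thm:sat}.

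For the running time I would use that the recursion tree coincides with the output cotree: it has $n$ leaves, hence at most $n-1$ internal nodes, and depth at most $n-1$. At each subproblem I keep the list $E_W$ of all pairs of the six relations with both endpoints in $W$. Testing the three (a)-conditions reduces to computing connected components (Rules (1),(2)) and strongly connected components (Rule (3)) of subgraphs whose arc sets are subsets of $E_W$, each in $O(|W|+|E_W|)$ time; once a rule is committed, one $O(|E_W|)$ pass distributes each intra-component pair to its child and discards each cross pair, so the work at $W$ is $O(|W|+|E_W|)$ (the factor for trying three rules is constant). The two governing sums are $\sum_W|W|\le n^2$, since each vertex lies in at most $n$ subproblems (the ancestors of its leaf), and $\sum_W|E_W|\le nm$, since each of the at most $m$ pairs lies in at most $n$ of the sets $E_W$ (the ancestors of the lowest common ancestor of its endpoints). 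Together with the $O(n^2+m)$ preprocessing for the disjointness check and for building $E_V$, this yields the claimed $O(n^2+nm)$ bound.

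The step I expect to be the main obstacle is the soundness of committing to one applicable rule without backtracking; the rest is bookkeeping. This rests entirely on the order-invariance result, so the crux is to confirm it really delivers the implication ``$(a)\text{ holds}\Rightarrow\bigl((b)\text{ holds}\iff\mathcal{H}[W]\text{ satisfiable}\bigr)$''. On the complexity side, the only delicate point is to pass the surviving pairs down to the children instead of rebuilding induced subgraphs from scratch, so that each pair contributes to $\sum_W|E_W|$ only along one ancestor chain and the total remains $O(nm)$.
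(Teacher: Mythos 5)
Your proposal is correct and follows essentially the same route as the paper: correctness rests on Thm.~\ref{thm:sat} combined with the order-invariance statement (the paper's Corollary on rule order, derived from Lemma~\ref{lem:satisf-partition}), which is exactly what licenses committing to the first applicable rule without backtracking, and the runtime comes from $O(n)$ recursive calls each doing connected/strongly-connected component work. Your accounting of the running time via $\sum_W |W|\le n^2$ and $\sum_W |E_W|\le nm$ is a slightly sharper bookkeeping than the paper's cruder ``each call costs $O(n+m)$ globally'' argument, but it yields the same $O(n^2+nm)$ bound and is not a genuinely different proof.
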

Alg.\  \ref{alg:build-cotree} provides a cotree; $(T;t)$, explaining a full
satisfiable set; $\Hs = \{ \Rs_0, \Rs_1, \Rs_{\Rone} \}$ extended from a given
partial set $\mathcal{H}$, 
such that $\Hs$ is satisfiable w.r.t.\ a forbidden set $\mathcal{F}$. 
Nevertheless, it can be shown that $\Hs$ can
easily be reconstructed from a given cotree $(T;t)$ in $O(|V|^2)$ time (see Appendix).

\renewcommand{\algorithmicrequire}{\textbf{Input:}}
\renewcommand{\algorithmicensure}{\textbf{Output:}}
\begin{algorithm}[tbp]
	\caption{Recognition of satisfiable partial sets $\mathcal{H}$ w.r.t. forbidden sets $\mathcal{F}$ and reconstruction of a cotree $(T,t)$ that explains $\mathcal{H}$.} 
	\label{alg:build-cotree} 
  \begin{algorithmic}[1] 
		\Require{Partial sets $\mathcal{H} = \{R_0, R_1, R_{\protect\Rone}\}$ and $\mathcal{F} = \{ F_0, F_1, F_{\protect\Rone} \}$ }
		\Ensure{ A cotree $(T;t)$ that explains $\mathcal{H}$, if one exists and $R_0 \cap F_0 = R_1 \cap F_1 = R_{\protect\Rone} \cap F_{\protect\Rone} = \emptyset$
	
						or 					the statement	
						 ``$\mathcal{H}$ is not satisfiable w.r.t.\  $\mathcal{F}$''}
		\If{$R_0 \cap F_0 = R_1 \cap F_1 = R_{\protect\Rone} \cap F_{\protect\Rone} = \emptyset$}\label{lin:valid-assumption}\label{lin:intersect-empty}
				 Call \texttt{BuildCotree}($V, \mathcal{H}, \mathcal{F}$) 
		\Else 
		\State Halt and output: ``$\mathcal{H}$ is not satisfiable w.r.t.\ $\mathcal{F}$''
		\EndIf
		
		\smallskip
	  \Function{\texttt{BuildCotree}}{$V, \mathcal{H}= \{R_0, R_1, R_{\protect\Rone} \}, \mathcal{F} = \{ F_0, F_1, F_{\protect\Rone} \}$ }
		
		\Comment{ $G_0$,  $G_1$ and $G_{\protect\Rone}$ are defined as in Thm. \ref{thm:sat} for given $\mathcal{H}$ and $\mathcal{F}$}
	  \If{$|V| = 1$} \Return{the cotree $((V, \emptyset), \emptyset)$}\label{lin:r0}
	  \ElsIf{$G_0$ (resp. $G_1$)  is	  disconnected}\label{lin:r1}
		  \State $\mathcal{C} \coloneqq $ the set of connected components $\{ C_1, \dots, C_k\}$ of $G_0$ (resp. $G_1$)
		  \State $\mathcal{T} \coloneqq \{\texttt{BuildCotree}(V[C_i],\ \mathcal{H}[C_i],\ 
						  \mathcal{F}[C_i]) \mid C_i \in \mathcal{C}\}$\label{lin:run-tree1}
		  \State \Return{the cotree from joining the cotrees in
									   $\mathcal{T}$ under a new root labeled $0$ (resp. $1$)}
	  \ElsIf{$G_{\protect\Rone}$ has more than one strongly connected component}\label{lin:r3}
		  \State $\mathcal{C} \coloneqq $ the set of strongly connected components $\{C_1, \dots, C_k\}$ of $G_{\protect\Rone}$
		  \State $\pi \coloneqq$ a topological order on the quotient $G / \{C_1, \dots, C_k\}$
		  \State $\mathcal{T} \coloneqq \{\texttt{BuildCotree}(V[C_i],\ \mathcal{H}[C_i],\ 
		  \mathcal{F}[C_i])\ |\ \text{ for all } C_i,\ i = 1, \dots, k\}$\label{lin:run-tree2}
		  \State \Return{the cotree $(T,t)$ obtained by joining the cotrees in $\mathcal{T}$	under a new root 

\hspace{1.2cm}									with label $\protect\Rone$, where $T_i$ is placed left from $T_j$ whenever $\pi(C_i) < \pi(C_j)$ 
 
}
		\Else
			\State Halt and output: ``$\mathcal{H}$ is not satisfiable w.r.t.\  $\mathcal{F}$''
	  \EndIf
	  \EndFunction
  \end{algorithmic}
\end{algorithm}

\vspace{-0.1in}
\section{Experiments}\vspace{-0.1in}

In this section, we investigate the accuracy of the recognition algorithm 
and compare recovered relations with known full sets that are obtained from simulated 
cotrees. The intended practical application that we have in mind, is to reconstruct
estimated homology relations. In this view, sampling random trees would not be sufficient, 
as the evolutionary history of genes and species tend to produce fairly balanced
trees. Therefore, we used the DendroPy \texttt{uniform\_pure\_birth\_tree} model \cite{sukumaran2010dendropy,hartmann2010sampling} 
to simulate 1000 binary gene trees for each of the three different leaf sizes $|L|\in \{25,50,100\}$. 
In addition, we randomly labeled the inner vertices of all trees as
``$0$'', ``$1$'' or ``$\Rone$'' with equal probability.

Each cotree $\mathcal{T} = (T;t)$ then represents a full
set $\Hs(\mathcal{T}) = \{ \Rs_0(\mathcal{T}), \Rs_1(\mathcal{T}), \Rs_{\Rone}(\mathcal{T}) \}$. 
For each of the full sets  $\Hs(\mathcal{T})$ and any two vertices
$x,y\in V$, the corresponding gene pairs $(x,y)$ and $(y,x)$ 
is removed from $\Rs_0(\mathcal{T})\cup \Rs_1(\mathcal{T}) \cup \Rs_{\Rone}(\mathcal{T})$ 
with a fixed probability $p\in \{0.1,0.2,\dots,1\}$.
Hence, for each $p\in 
\{0.1,0.2,\dots,1\}$ and each fixed leaf size $|L|\in \{25,50,100\}$, 
we obtain 1000 partial sets $\mathcal{H} = \{R_0,R_1,R_{\Rone} \}$.
We then use Alg.\  \ref{alg:build-cotree} on each partial set $\mathcal{H}$, to
obtain a cotree $\widetilde{\mathcal{T}} = (\widetilde T; \tilde t)$ explaining 
the full set $\Hs(\widetilde{\mathcal{T}}) = \{\Rs_0(\widetilde{\mathcal{T}}), R_1(\widetilde{\mathcal{T}}),
R_{\Rone}(\widetilde{\mathcal{T}})\}$.

\begin{figure}[t]
\centering     %%% not \center
	% \subfigure{\label{fig:1st}{\scalebox{.4}{\input{sizes_reldiff.tex}}}}
	% \subfigure{\label{fig:forbid}{\scalebox{.4}{\input{sizes_forbiddiff.tex}}}}
\subfigure{\label{fig:1st}\includegraphics[viewport = 85 363 400 700, clip, width=0.35\textwidth]{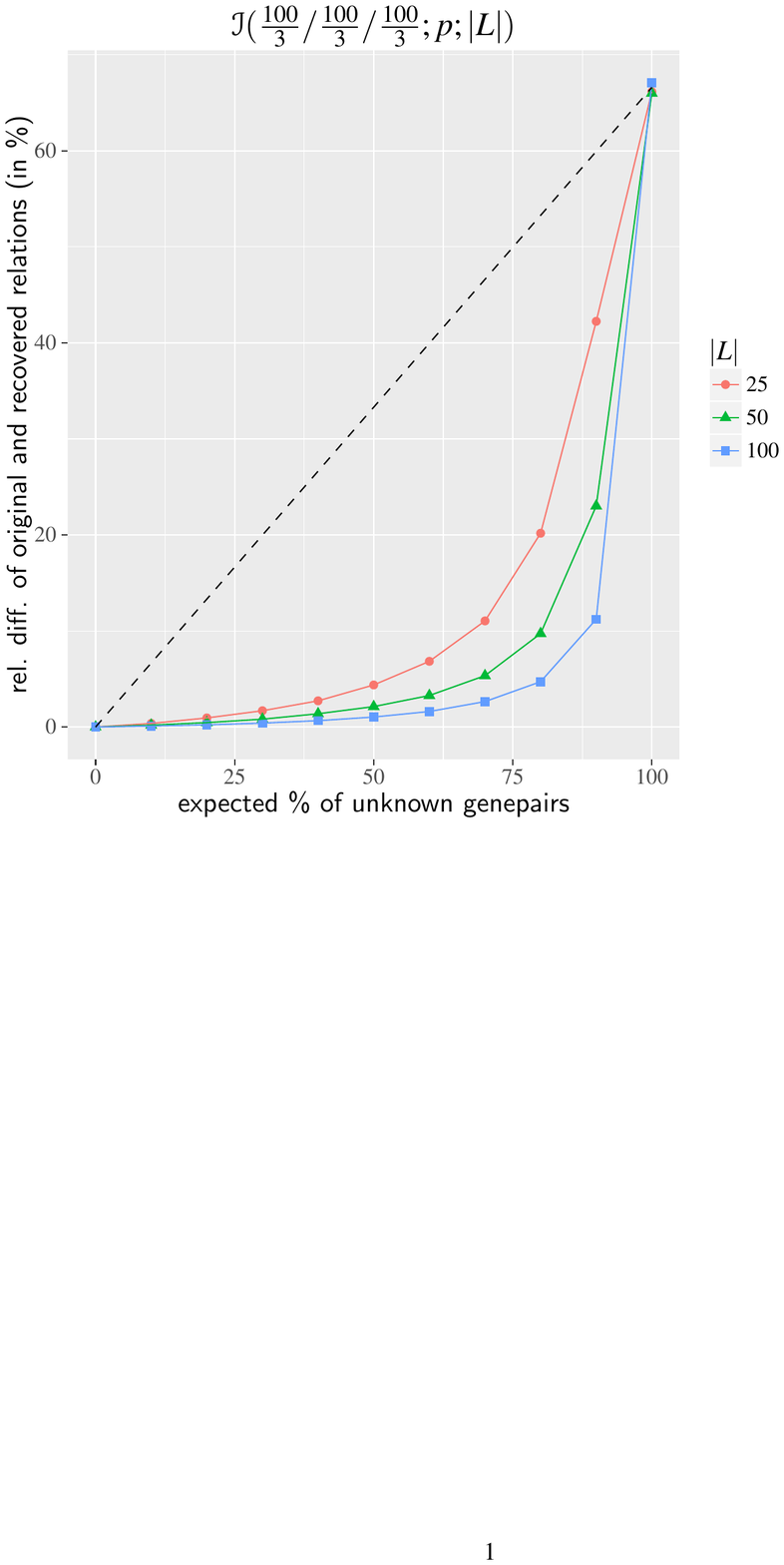}	}
\subfigure{\label{fig:forbid}\includegraphics[viewport = 85 363 400 700, clip, width=0.35\textwidth]{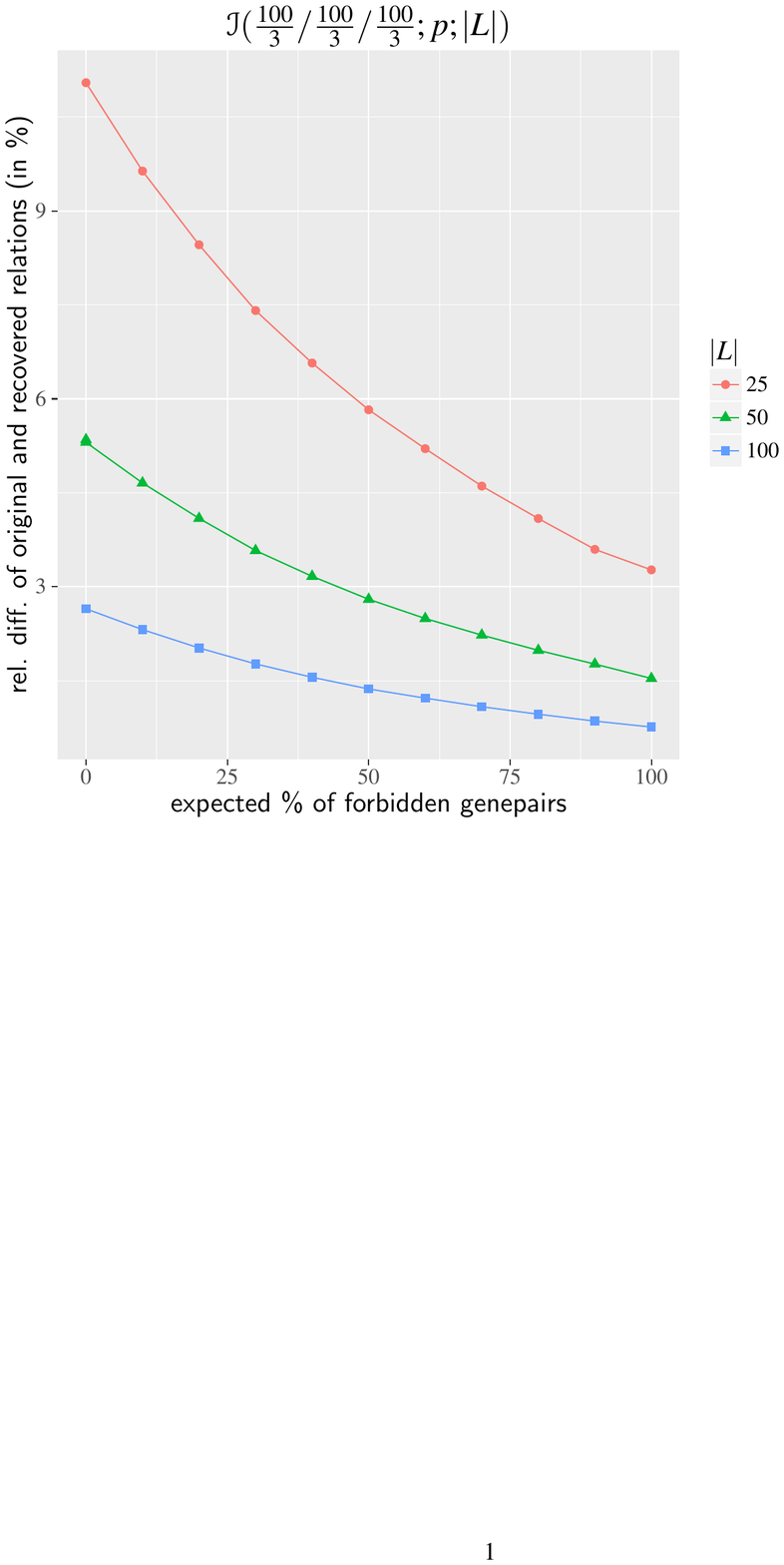}}
\subfigure{\label{fig:forbid2}	\includegraphics[viewport = 400 380 435 700, clip, width=0.05\textwidth]{sizes_reldiff_viewport.pdf}		}
\caption{Shown are the average relative differences of original and recovered relations
				 depending on the size of unknowns (left) and the size of additional forbidden relations (right). \vspace{-0.1in} }
\label{fig:plot1}
\end{figure}

Fig.\  \ref{fig:plot1}(left) shows the average relative difference of the original full
set $\Hs(\mathcal{T})$ and the recovered full sets
$\Hs(\widetilde{\mathcal{T}})$ for each instance. The dashed line in the plots
of Fig.\  \ref{fig:plot1}(left) shows the expected relative difference when each unknown
gene pair is assigned randomly to one of the relations in the partial set
$\mathcal{H}$. As expected, the relative differences increases with the number
of unassigned leaf pairs. Somewhat surprisingly, even if 80\% of pairs of leaves are
expected to be unassigned in $\mathcal{H}$, it is possible to averagely recover
79.8\% - 95.3\% of the original relations. Moreover, the plot in Fig.\ 
\ref{fig:plot1}(left) also suggest that the accuracy of recovered homology
relations increases with the input size, i.e., the number of leaves. To
explain this fact, observe that the number of constraints given by the full set of
homology relations on some leaf set $L$ is $O(|L|^2)$. Conversely, the number of
inner vertices in a tree only increases linearly with $L$, $O(|L|)$. Hence, on
average the number of constraints given on the labeling of an internal vertex in
the gene tree is $O(|L|^2)/O(|L|)= O(|L|)$. Note, Alg.\ 
\ref{alg:build-cotree} constructs cotrees and hence, if there are more leaves,
then there are also more constraints for the rules (labeling of the inner
vertices) that are allowed to be applied. Therefore, with an increasing number
of leaves the correct relation between unassigned pairs of leaves in
$\mathcal{H}$ are already determined.

Fig.\  \ref{fig:plot1}(right) shows the impact of additional forbidden relations
for the instances where we have removed pairs $(x,y)$ with probability $p=0.7$.
For each of the partial sets $\mathcal{H}$, 
we have chosen two vertices $x$ and $y$ where neither  $(x,y)$ nor $(y,x)$
is contained in $\mathcal{H}$ with probability $p'\in \{0.1, 0.2, \dots, 1\}$
and assigned $(x,y),(y,x)$ to a forbidden relation $F_i$ such that 
$(x,y),(y,x)\in \Rs_j(\mathcal{T})$ with $i,j\in\{0,1,\Rone\}$ implies that $i\neq j$.
In other words, if $(x,y),(y,x)$ are assigned to $F_i$ with $i\in\{0,1,\Rone\}$ 
then $(x,y),(y,x)$ were not in the original set $\Rs_j(\mathcal{T})$. 
The latter is justified to ensure satisfiability of the partial relations w.r.t.
the forbidden relations. Again, we compared the relative difference of the original full
sets $\Hs(\mathcal{T})$ and the recovered full sets
$\Hs(\widetilde{\mathcal{T}})$ computed with Alg.\  \ref{alg:build-cotree}.
The plot shows that, with an increasing number of forbidden leaf pairs, the relative difference 
decreases. Clearly, the more leaf pairs are forbidden the more of such leaf pairs
are not allowed to be assigned to one of the relations. Therefore, the degree
of freedom for assigning a relation to an unassigned pair decreases with an
increase of the number of forbidden pairs.

One factor that may affect the results of the plots shown in Fig.\ 
\ref{fig:plot1} is the order in which rules are chosen when more than
one rule is satisfied. By construction, Alg.\  \ref{alg:build-cotree} 
fixes the order of applied rules as follows: first Rule (1), then Rule (2), then Rule (3). 
In other words, when possible 
the trees for the satisfiable (strongly) connected components 
are first joined by a common root labeled ``$1$''; if this does not
apply, then with common root labeled ``$0$'', and ``$\Rone$'' otherwise. 
To investigate this issue in more details, we modified Alg.\  \ref{alg:build-cotree} 
so that either a different fixed rule order or a random rule order 
 is applied.

\begin{figure}[tbp]
\centering     %%% not \center
\subfigure{\label{fig:a}\includegraphics[viewport = 85 363 367 700, clip, width=0.35\textwidth]{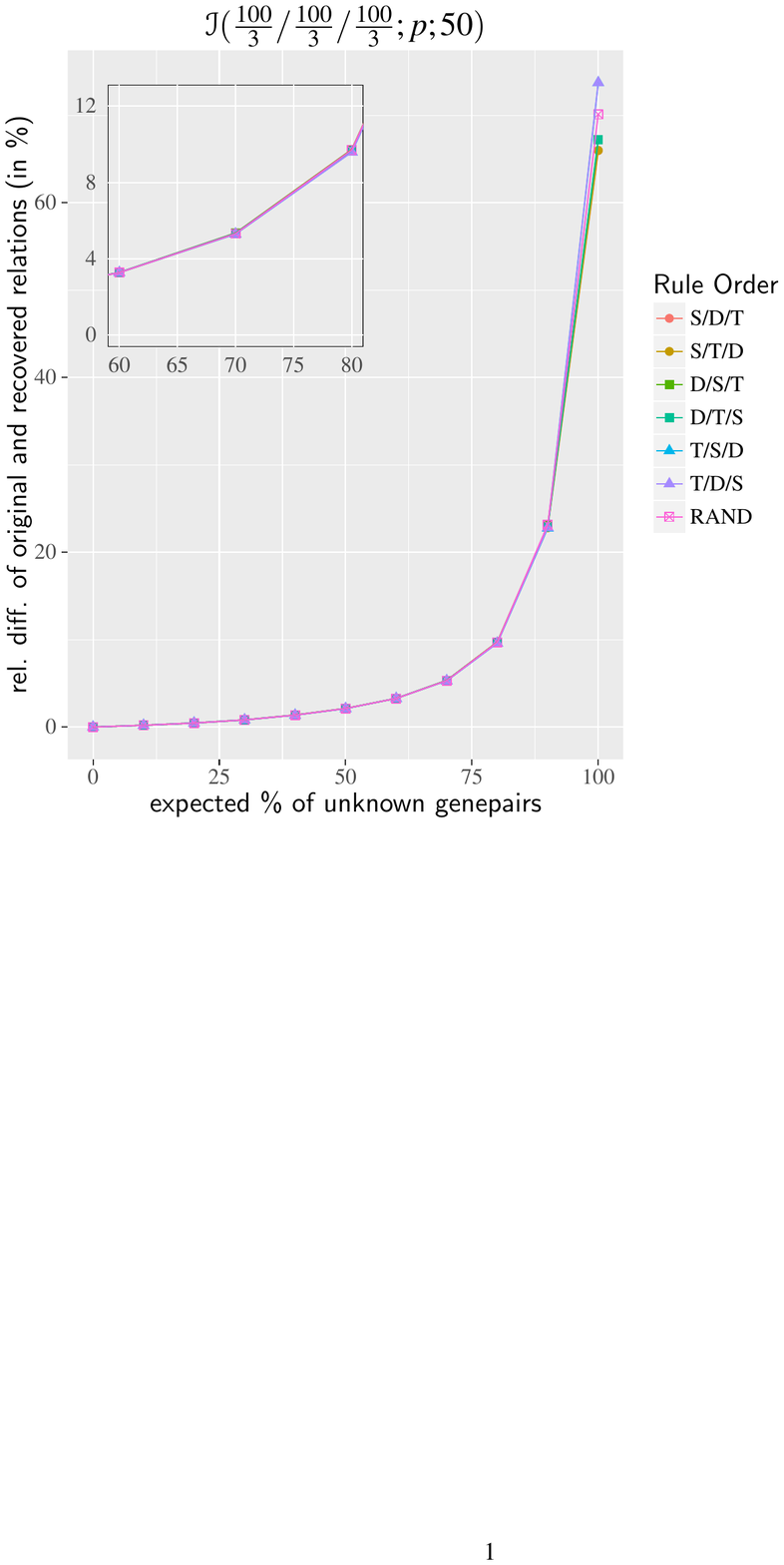}	}
\subfigure{\label{fig:b}\includegraphics[viewport = 85 363 367 700, clip, width=0.35\textwidth]{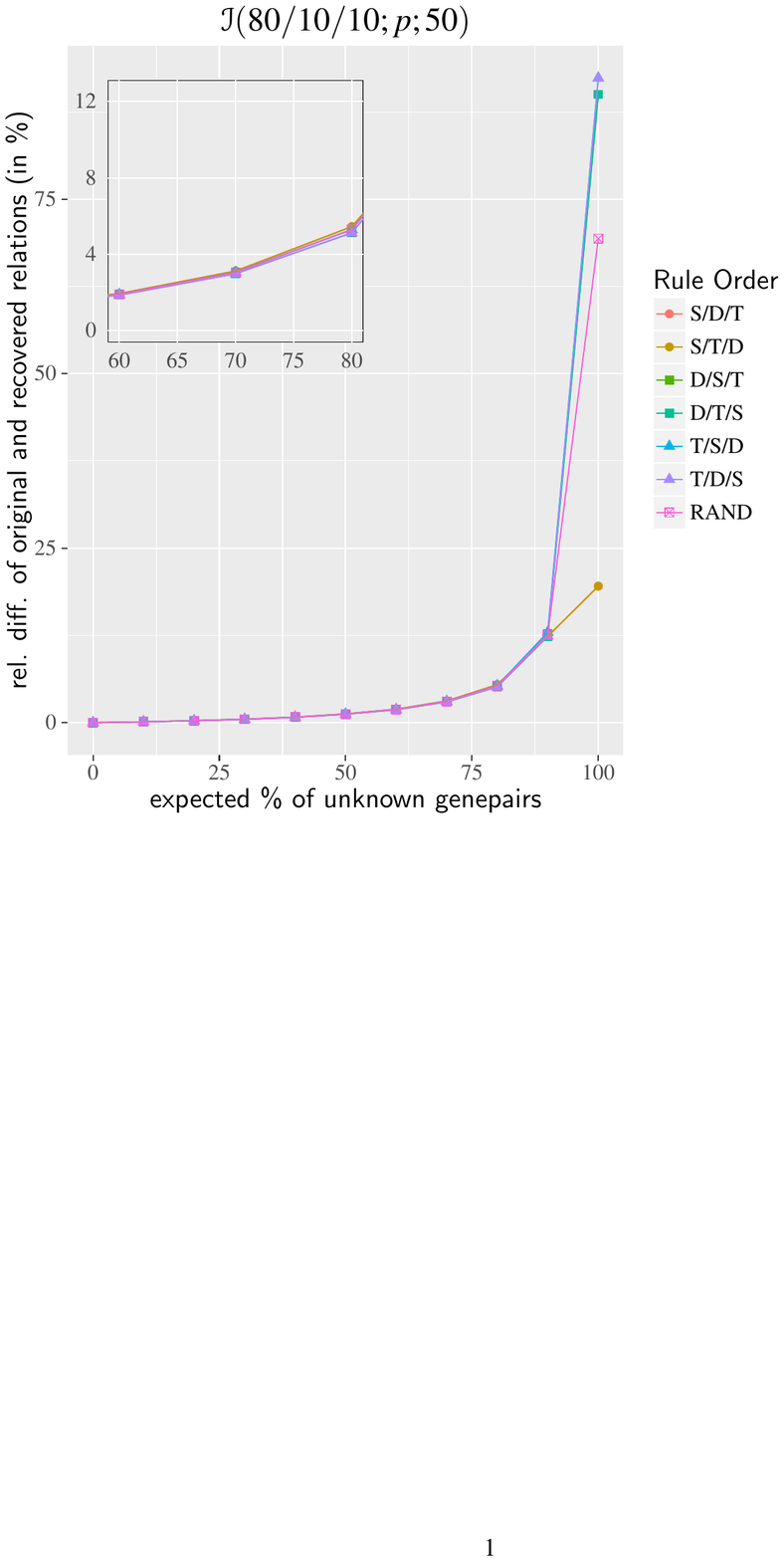}}
% \subfigure{\label{fig:d}	\includegraphics[viewport = 367 363 436 700, clip, width=0.1\textwidth]{sp50_D33_H33_forbiddiff-viewport.pdf}		}
\subfigure{\label{fig:d}	\includegraphics[viewport = 367 363 436 700, clip, width=0.1\textwidth]{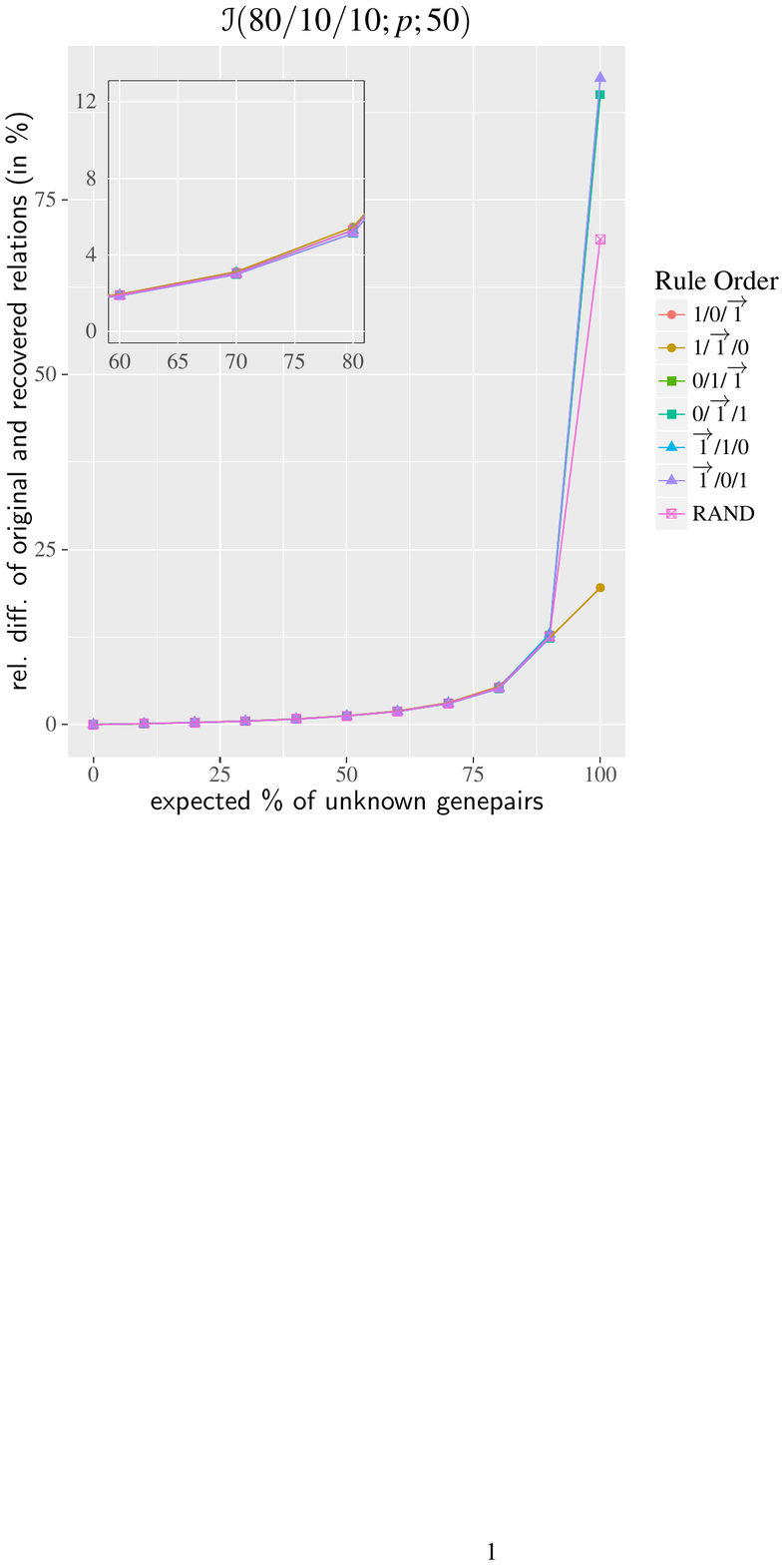}		}
	\caption{Shown are the average relative differences of original and recovered relations, 
					 depending on the percentage of unassigned pairs and the rule order 
					 for cotrees with uniform (left) and skewed (right) label distribution. \vspace{-0.1in}}
\label{fig:plot2}
\end{figure} 

Fig.\  \ref{fig:plot2}(left) shows the plot for the partial relations for the fixed
leaf size $|L| = 50$. The rule orders are shown
in the legend of Fig.\  \ref{fig:plot2}. Here, $X/Y/Z$, with $X,Y,Z\in \{1\ \widehat
= \textrm{ Rule(2)}, 0\ \widehat = \textrm{ Rule(1)},\Rone\ \widehat = \textrm{	Rule(3)}\}$ being distinct;
indicates that first rule $X$ is checked, then  rule $Y$ and, if both are not applicable, then rule $Z$ is used. 
\emph{RAND} means that each of the allowed rules are chosen with equal probability. 
As one can observe, the rule order does not have a significant impact 
on the quality of the recovered full sets. 
This observation might be explained by the fact that we have
used a random assignment of events $0,1$ and $\Rone$ for the vertices of
the initial simulated trees $T$. 

To investigate this issue in more detail, we additionally 
used 1000 unlabeled simulated trees $T$ with $|L| = 50$ 
and assigned to each vertex $v$ a label $t(v)=1$ with a probability $p=0.8$, 
label $t(v)=\Rone$ with $p=0.1$ and label $t(v)=0$ with $p=0.1$
Again, each resulting cotree $\mathcal{T} = (T;t)$ represents a full
set $\Hs(\mathcal{T}) = \{ \Rs_0(\mathcal{T}), \Rs_1(\mathcal{T}), \Rs_{\Rone}(\mathcal{T}) \}$
from which we obtain partial sets $\mathcal{H} = \{R_0,R_1,R_{\Rone} \}$
as for the other instances. 
Fig.\  \ref{fig:plot2}(right) shows the resulting plots. 
As it can be observed, even a quite skewed distribution of labels
in the cotrees and the choice of rule order does not have an effect on 
the quality of the recovered full sets. 

\emph{In summary, the results show that it suffices to have only a very few but correct pairs
			of the original relations to reconstruct most of them.}

\section*{Acknowledgment}\vspace{-0.1in}
This contribution is supported in part by the Independent Research Fund
				Denmark, Natural Sciences, grant DFF-7014-00041.
\vspace{-0.15in}

%\section*{Appendix}
\clearpage
\section*{\appendixname}

\subsection*{Proof of Lemma \ref{lem:satisf-partition}}

\begin{proof}
		Assume that $\mathcal{H}$ is satisfiable w.r.t.\ $\mathcal{F}$. 
		Hence,  $\mathcal{H}$ can be extended to  
		a full set $\Hs = \{ \Rs_0, \Rs_1, \Rs_{\Rone} \}$ that is
		satisfiable w.r.t.\ $\mathcal{F}$. Thus, there is a cotree $\mathcal{T}=({T},{t})$
		that explains $\mathcal{\Hs}$ and, in particular,  $\Gs = (V,\Rs_1\cup \Rs_{\Rone})$ must be
	  a di-cograph.

		Let $\{C_1, \dots, C_k\}$, $k\geq 1$ be a partition of $V$. 
		Lemma \ref{lem:binary-induced} implies that $\Gs[C_i]$ is a di-cograph for any $i\in \{1,\dots,k\}$. 
		Since $\Rs_0\cap F_0 = \Rs_1\cap F_1 =  \Rs_{\Rone}\cap F_{\Rone} = \emptyset$,
		 the di-cograph $\Gs$ as well as the
		relation $\Rs_0$ does not contain arcs in $F_1, F_0$ and	$F_{\Rone}$. Therefore, 
		$\Gs[C_i]$ as well as the
			relation $\Rs_0[C_i]$ does not contain arcs in $F_1, F_0$ and	$F_{\Rone}$.
		This implies that  the full set 
		$\Hs[C_i]$ restricted to the vertices in $C_i$ is satisfiable w.r.t.\
		$\mathcal{F}[C_i]$. Hence, there is an extension $\Hs[C_i]$ 
		of the partial set $\mathcal{H}[C_i]$  that is satisfiable w.r.t. $\mathcal{F}[C_i]$
		and therefore,  $\mathcal{H}[C_i]$ is satisfiable w.r.t. $\mathcal{F}[C_i]$.

		Now assume that for any partition $\{C_1, \dots,	C_k\}$ of $V$ the 
		induced sub-relations in
		$\mathcal{H}[C_i]$ are satisfiable w.r.t.\ $\mathcal{F}[C_i]$.
		Thus, for the trivial partition $\{V\}$ of $V$, we have
		$\mathcal{H}[V] = \mathcal{H}$ is satisfiable  w.r.t.\ $\mathcal{F}[V] = \mathcal{F}$, 
		which completes		the proof. 
\qed
\end{proof}

\subsection*{Proof of Theorem \ref{thm:sat}}
\begin{proof}
"$\Leftarrow$:" Assume that at least one of the conditions of Rule (0-3) is satisfied. 
We will show that one can extend $\mathcal{H}$ to a full set $\Hs$ that is satisfiable
w.r.t.\ $\mathcal{F}$.

Clearly, if Rule (0) is satisfied, then $\mathcal{H}$ is already full and the
cotree that corresponds to the single-vertex graph $K_1$
 explains $\mathcal{H}$. Hence,
$\mathcal{H}$ is satisfiable w.r.t.\ $\mathcal{F}$. In what follows, we will
therefore assume that $|V| \geq 2$.

Assume that Rule (1) is satisfied, and let $C_1, \dots, C_k$, $k \geq 2$,
be the connected components of $G_0$. Since $\mathcal{H}[C_i]$ is
satisfiable w.r.t.\ $\mathcal{F}[C_i]$ for all $i \in \{1,\dots,k\}$, we can
extend $\mathcal{H}[C_i]$ to a full set $\Hs[C_i]$ that is satisfiable
w.r.t.\ $\mathcal{F}[C_i]$. Hence, there exists a cotree $(T_i, t_i)$
representing $\Hs[C_i]$.
Let $\mathcal{H}' = \{ R'_0,\Rs_1, \Rs_{\Rone}\}$ be the resulting
partial set obtained from extending $\mathcal{H}$, such that
$\Hs[C_i]$ is full and satisfiable w.r.t.\ $\mathcal{F}[C_i]$ for
all $i \in \{ 1, \dots, k \}$, and let $(T_1, t_1), \dots, (T_k, t_k)$ be
the cotrees explaining $\Hs[C_1], \dots, \Hs[C_k]$, respectively. 

For any remaining unassigned pair $(x,y) \not\in R'_0\cup \Rs_1\cup
R_{\Rone}^{\star,sym}$, the vertices $x$ and $y$ must be contained in two
distinct connected components $C_i$ and $C_j$ of $G_0$. Now, we extend
$\mathcal{H'}$ to a full set $\Hs$ by adding all unassigned pairs to $R'_0$
to obtain the set $\Rs_0$. We continue to show that $\Hs= \{ \Rs_0,\Rs_1,
\Rs_{\Rone}\}$ is satisfiable w.r.t.\ $\mathcal{F}$. Let $x$ and $y$ be two
vertices that reside in distinct connected components $C_i$ and $C_j$ of $G_0$.
Clearly, $(x,y)\notin F_0\cup R_1 \cup R_{\Rone}^{sym}$, as otherwise, $x$ and
$y$ would be in the same connected component of $G_0$. By construction,
$(x,y)\notin F_0\cup \Rs_1 \cup R_{\Rone}^{\star,sym}$ and $F_0\cap \Rs_0= \emptyset$.

Now, we join cotrees $(T_1,t_1), \dots, (T_k,t_k)$ the under a new root $\rho_T$
with label $0$ to obtain the cotree $(T,t)$. 
Note, any subtree rooted at a child of $\rho_T$ explains
$\Hs[C_i]$ for some connected component $C_i$ of $G_0$. Thus,
$(x,y) \in \Rs_0$ if and only if $t(\lca(x,y)) = 0$.
It follows that
$(T,t)$ explains the full set $\Hs = \{\Rs_0,\Rs_1,\Rs_{\Rone}\}$ 
and $\Rs_0 \cap F_0 = \Rs_1 \cap F_1 = \Rs_{\Rone} \cap F_{\Rone} = \emptyset$. Therefore
$\Hs$ is satisfiable w.r.t.\ $\mathcal{F}$. Hence, $\mathcal{H}$ is
satisfiable w.r.t.\ $\mathcal{F}$. 

The same arguments can be applied on the connected components of $G_1$ if Rule
(2) is satisfied by extending every unassigned pair $x,y\in V$ between two
distinct connected components of $G_1$ to $1$ and  setting
$t(\rho_T) \coloneqq 1$. Again, $\mathcal{H}$ is satisfiable w.r.t.\
$\mathcal{F}$.

Finally, assume that Rule (3) is satisfied, and let $C_1, \dots, C_k$, $k \geq
2$, be the strongly connected components of $G_{\Rone}$. As before, since
$\mathcal{H}[C_i]$ is satisfiable w.r.t.\ $\mathcal{F}[C_i]$ for all $i \in \{ 1,
\dots, k \}$, the induced subgraph $\mathcal{H}[C_i]$ can be extended to a full set that is satisfiable
w.r.t.\ $\mathcal{F}[C_i]$. Again, we let $\mathcal{H}' = \{ \Rs_0, \Rs_1,
R_{\Rone}'\}$ be the resulting partial set obtained from
$\mathcal{H}$, such that $\mathcal{H}'[C_i]$ is a full satisfiable set w.r.t.\
$\mathcal{F}[C_i]$, for all $i \in \{ 1, \dots, k \}$. Let $(T_1,t_1),
\dots, (T_k,t_k)$ be the corresponding cotrees explaining $\mathcal{H}'[C_1],
\dots, \mathcal{H}'[C_k]$, respectively.
In what follows, we want to show that
the cotree $(T,t)$ with root $\rho_T$ with $\Rone$ that is obtained by
joining the cotrees $(T_1,t_1), \dots (T_k, t_k)$ under the root $\rho_T$
in a particular order explains a full set $\Hs = \{\Rs_0,\Rs_1,\Rs_{\Rone}\}$
that extends $\mathcal{H'}$ and is satisfiable w.r.t.\  $\mathcal{F}$. 
To this end, however, we need to investigate the structure of the relations 
$R_0,R_1,R_{\Rone}$ and $\Rs_0, \Rs_1,R_{\Rone}'$ in some more detail. 

Let $Q \coloneqq G_{\Rone}/ \{C_1, \dots, C_k \}$ be the quotient graph. 
By definition, $Q$ is a DAG, and thus, there exists a topological
order $\pi$ on $Q$ such that for any arc $(C_i,C_j) \in E(Q)$, it holds that $\pi(C_i) <
\pi(C_j)$. W.l.o.g.\ assume that $\pi$ is already given by the ordering of $C_1,
\dots, C_k$, i.e., $\pi(C_i) \coloneqq i$ for all $i
\in \{ 1, \dots, k \}$.

By construction, for any unassigned pair $(x,y) \not\in \Rs_0 \cup \Rs_1 \cup
R'_{\Rone}$ of $\mathcal{H}'$, the two vertices $x$ and $y$ are contained in two
distinct strongly connected components $C_i$ and $C_j$ of $G_{\Rone}$. Now, we extend
$\mathcal{H}'$ to a full set $\Hs=\{\Rs_0, \Rs_1, \Rs_{\Rone}\}$,
by adding all unassigned pairs $(x,y)$ with $x\in C_i$ and $y\in C_j$ to the
relation $R'_{\Rone}$ if and only if $i < j$, to obtain $\Rs_{\Rone}$. 

Let $C_i$ and $C_j$ be distinct strongly connected components of $G_{\Rone}$.
We show now that $(x,y)\in \Rs_{\Rone}$ for all $x\in C_i$ and $y\in C_j$
if and only if $i < j$.
By construction, the latter is satisfied for all $(x,y)\in \Rs_{\Rone}\setminus R'_{\Rone}$.
Let $(u,v)$ be an already assigned pair that is contained in $\Rs_0 \cup \Rs_1 \cup
R'_{\Rone}$ such that $u\in C_i$ and $v\in C_j$. Therefore, 
$(x,y)\in R_0\cup R_1\cup R_{\Rone}$ since no pairs $(x,y)$ have been added 
between vertices $x\in C_i$ and $y\in C_j$ to obtain $\Rs_0 \cup \Rs_1 \cup
R'_{\Rone}$. But then, $(x,y)\in R_{\Rone}$, as otherwise symmetry of $R_0$ and $R_1$ would imply
that $u$ and $v$ are in the same strongly connected component of $G_{\Rone}$. 
Furthermore, we added only pairs $(a,b)$ to $R_{\Rone}$ to obtain  $R'_{\Rone}$
where $a$ and $b$ are contained within the same strongly connected component of $G_{\Rone}$.
% Thus,  $(u,v) \in R_{\Rone}$ for all  $(u,v) \in \Rs_0 \cup \Rs_1 \cup
% R'_{\Rone}$ where $u\in C_i$ and $v\in C_j$.
Thus, for all  $(u,v) \in \Rs_0 \cup \Rs_1 \cup R'_{\Rone}$ with
$u$ and $v$ in distinct strongly connected components $C_1,\dots, C_k$
it holds that $(u,v) \in R_{\Rone}$.
In summary,  we added the unassigned pair $(x,y)$ with  $x\in C_i$ and $y\in C_j$ 
to $R'_{\Rone}$ if and only if $\pi(C_i) = i < j = \pi(C_j)$. In addition, 
any already assigned pair  $(u,v)$ with  $u\in C_i$ and $v\in C_j$ 
must be contained in $R_{\Rone}$, which 
implies that $(C_i, C_j) \in E(Q)$, and hence, $\pi(C_i) = i < j = \pi(C_j)$. 
Therefore, $(x,y)\in \Rs_{\Rone}$ if and only if $i < j$.
Moreover, the latter arguments also ensure that the sets $\Rs_0, \Rs_1$ and $\Rs_{\Rone}$ 
are pairwisely disjoint.

We continue to show that $\Rs_{\Rone} \cap F_{\Rone} = \emptyset$. 
By construction, $R'_{\Rone}[C_i]$ is satisfiable
w.r.t.\  $F_{\Rone}[C_i]$ for all $i\in \{1,\dots,k\}$. 
Hence,  $R'_{\Rone} \cap F_{\Rone} = \emptyset$. 
Thus, it remains to show that for all $(x,y) \in \Rs_{\Rone}\setminus R'_{\Rone}$
it holds that $(x,y)\not\in F_{\Rone}$. 
Let $(x,y) \in \Rs_{\Rone}\setminus R'_{\Rone}$.
Thus, $x\in C_i$ and $y\in C_j$ for some strongly connected components of 
$G_{\Rone}$ with $i<j$. 
Since $(x,y) \notin R'_{\Rone}$, we also have $(x,y) \not\in R_{\Rone}$. 
Assume for contraction that $(x,y) \in F_{\Rone}$. Hence, 
$(y,x) \in \overleftarrow{F_{\Rone}}$.
However, this would imply that $(C_j, C_i) \in E(Q)$,
meaning that $i > j$ and therefore $(x,y) \not\in \Rs_{\Rone}$; a contradiction.

We are now in the position to create a cotree $(T,t)$ that explains $\Hs = \{\Rs_0,\Rs_1,\Rs_{\Rone}\}$. 
To this end, we join the cotrees $(T_1,t_1), \dots (T_k, t_k)$ under common root $\rho_T$
with label $\Rone$. Moreover,  cotrees $(T_1,t_1), \dots (T_k, t_k)$ are added to $(T,t)$ such that 
any leaf of $T_i$ is left of any leaf of $T_j$ iff $i < j$. 
By construction $(T,t)$ explains all sub-relations in $\Hs[C_i]$, $1\leq i\leq k$.
For any other pair $(x,y)$ where $x$ and $y$ are contained in different
strongly connected components $C_i$ and $C_j$ of $G_{\Rone}$, respectively,
we have $(x,y)\in \Rs_{\Rone}$, $(y,x)\not\in \Rs_{\Rone}$, $(x,y) \not\in
F_{\Rone}$ and $i<j$. As shown above, $\Rs_0\cap F_0 = \Rs_1 \cap F_1 = \Rs_{\Rone} \cap F_{\Rone} = \emptyset$.
By construction,  $t(\lca_T(x,y))=t(\rho_T) = \Rone$ and $x$ is placed left of $y$ in $T$.
Therefore, any pair $(x,y)\in \Rs_{\Rone}$ is explained by $(T,t)$. 
In summary, $(T,t)$ explains $\Hs$ and $\Hs$ 
is satisfiable w.r.t.\ $\mathcal{F}$. Hence,  $\mathcal{H}$
is satisfiable w.r.t.\ $\mathcal{F}$.

``$\Rightarrow:$'' 
	Assume that $\mathcal{H} = \{R_0,R_1,R_{\Rone}\}$ is satisfiable w.r.t.\ $\mathcal{F}$. 
  Trivially, Rule (0) is satisfied in the case that $|V| = 1$. Hence, let $|V|
	\geq 2$. By assumption, $R_0,R_1$ and $R_{\Rone}$ can be extended to obtain a full
	satisfiable set $\Hs = \{\Rs_0,\Rs_1,\Rs_{\Rone}\}$ such that $\Rs_0 \cap F_0 = 
	\Rs_1\cap F_1 = \Rs_{\Rone} \cap F_{\Rone} = \emptyset$. 
	Hence, there is a cotree  $(T;t)$ that explains $\Hs$. Consider the root
	$\rho_T$ of $T$ with children $v_1,\dots, v_k$ and the particular leaf sets 
  $L(v_1), \dots, L(v_k)$. 
  Note that, by definition of cotrees each inner vertex of $(T;t)$, and in
  particular the root $\rho_T$ has at least two children.

	Assume that $\rho_T$ is labeled $0$. Since $(T;t)$ explains $\Hs$, 
	for any $x\in L(v_i)$ and $y\in L(v_j)$, $i\neq j$ we have 
	$(x,y),(y,x)\in \Rs_0$. 	Since additionally $\Rs_0 \cap
	F_0=\emptyset$, the digraph $\Gs_0 \coloneqq (V,\Rs_1\cup \Rs_{\Rone} \cup F_0)$
	must be disconnected. 
	Since $G_0$ is a subgraph of $\Gs_0$, the graph 
	 $G_0$ is disconnected as well.  By Lemma
	\ref{lem:satisf-partition},  $\mathcal{H}[C]$ 
	is satisfiable w.r.t.\ $\mathcal{F}[C]$ for 
	any connected components $C$ of $G_0$.
	Hence, Rule (1) is satisfied.

	If $\rho_T$ is labeled $1$, then we can apply analogous arguments
	to see that $\Gs_1$ and $G_1$ are disconnected and that  Rule (2) is satisfied. 

	Now assume that $\rho_T$ is labeled $\Rone$.  Let  $x\in L(v_i)$
  and $y\in L(v_j)$, $i\neq j$ and assume that $x$ is placed left of $y$ in $T$.
	Since $(T,t)$ explains $\Hs$, we have $(x,y) \in \Rs_{\Rone}$ and   $(y,x) \notin \Rs_{\Rone}$.
	Moreover,  $(x,y) \in \Rs_{\Rone}$ implies that $(x,y) \notin F_{\Rone}$ and thus, 
	$(y,x)\notin\overleftarrow{F_{\Rone}}$. The latter together with the disjointedness
	of the sets $\Rs_0,\Rs_1,\Rs_{\Rone}$ implies that 
	$(x,y) \in E(\Gs_{\Rone})$ and   $(y,x) \notin E(\Gs_{\Rone})$, where
	 $\Gs_{\Rone} \coloneqq (V,\Rs_0\cup \Rs_1\cup \Rs_{\Rone} \cup \overleftarrow{F_{\Rone}})$.
  Since the latter is satisfied for all elements in any of the leaf sets 
  $L(v_1), \dots, L(v_k)$, we immediately obtain that 
	$\Gs_{\Rone} = \Gs_{\Rone}[L(v_1)] \oslash \dots \oslash
	\Gs_{\Rone}[L(v_k)]$. Hence,  $\Gs_{\Rone}$ contains more than one
	strongly connected component. Since $G_{\Rone}$ is a subgraph of $\Gs_{\Rone}$, it
	follows that $G_{\Rone}$ contains more than one strongly connected
	component.  By Lemma \ref{lem:satisf-partition}, 
	$\mathcal{H}[C]$ is satisfiable w.r.t.\ $\mathcal{F}[C]$ for 
	all strongly connected components $C$ of $G_{\Rone}$.
	Hence, Rule (3) is satisfied.
\qed
\end{proof}

%%%%%%%%%%%%%%%%%%%%%%%%%%%%%%%%%%%%%%%%%%%%%%%%%%%%%%%%%%%%%%%%%%%%%%%%%%%%%%%555

\subsection*{Correctness of Recognition Algorithm \ref{alg:build-cotree}}

In this part we provide a polynomial-time algorithm for the 
recognition of partial satisfiable sets $\mathcal{H}$ and the 
reconstruction of respective extended sets $\Hs$ and a cotree $(T,t)$
that explains $\mathcal{H}$, in case $\mathcal{H}$ is satisfiable. 
Thm.\  \ref{thm:sat} gives a characterization of satisfiable partial sets
with respect to some forbidden set. 
However, in order to design a
polynomial-time algorithm, we have to decide which rule can be applied at which
step. Note, several rules might be fulfilled at the same time, that is,
the graph $G_0$ and $G_1$ can be disconnected 
and $G_{\Rone}$ can contain more than one
strongly connected component at the same time.
Since sub-condition (b) for each Rule (1-3) is recursively defined, 
we might end in a non-polynomial algorithm, if 
a ``correct'' choice of the rule would be important.
A trivial example where all rules can be applied at the same time
is given by the partial set of empty relations $\mathcal{H} = \{ \emptyset,
\emptyset, \emptyset \}$. 
Interestingly, Lemma \ref{lem:satisf-partition} immediately implies that
it does not matter which applicable rule is chosen to obtain
$\Hs$ that is  satisfiable w.r.t.\ $\mathcal{F}$.
\begin{corollary}\label{cor:any-rule}
	Let $\mathcal{H} = \{ R_0, R_1, R_{\Rone} \}$ be a partial set, and $\mathcal{F} = \{ F_0, F_1, F_{\Rone} \}$ a forbidden set. Assume
	that Rule (i).a and (j).a with $i,j\in \{1,2,3\}$ (as given in Thm.
	\ref{thm:sat})
 are satisfied. 

	Then, $\mathcal{H}$ is satisfiable w.r.t.\ $\mathcal{F}$ if and only if Rule i.b and
	j.b are satisfied. 	
\end{corollary}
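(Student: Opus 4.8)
The plan is to prove the corollary as a chain of equivalences: given that Rule~$(i)$.a and Rule~$(j)$.a hold, I will show that ``$\mathcal{H}$ is satisfiable w.r.t.\ $\mathcal{F}$'', ``Rule~$(i)$.b holds'', and ``Rule~$(j)$.b holds'' are pairwise equivalent, from which the stated equivalence with the conjunction is immediate. The decisive observation is that the (strongly) connected components occurring in each rule furnish an honest \emph{partition} of $V$: for $i\in\{1,2\}$ the connected components of $G_0$ resp.\ $G_1$ partition $V$, and for $i=3$ the strongly connected components of $G_{\Rone}$ partition $V$. The forward implication is then exactly what Lemma~\ref{lem:satisf-partition} delivers: if $\mathcal{H}$ is satisfiable w.r.t.\ $\mathcal{F}$, applying the ``only if'' direction of the lemma to the partition coming from Rule~$(i)$ tells us that $\mathcal{H}[C]$ is satisfiable w.r.t.\ $\mathcal{F}[C]$ for every such component $C$, which is precisely Rule~$(i)$.b; running the same argument on Rule~$(j)$'s partition yields Rule~$(j)$.b. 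This direction needs nothing beyond Lemma~\ref{lem:satisf-partition}.

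For the converse I would invoke Theorem~\ref{thm:sat}. Since Rule~$(i)$.a is assumed and Rule~$(i)$.b now holds, Rule~$(i)$ is fully satisfied, so to conclude that $\mathcal{H}$ is satisfiable via the ``$\Leftarrow$'' direction of Theorem~\ref{thm:sat} it remains only to verify the disjointness condition $R_0\cap F_0 = R_1\cap F_1 = R_{\Rone}\cap F_{\Rone} = \emptyset$. Note that Rule~$(j)$.b is not even required for this step; once each of Rule~$(i)$.b and Rule~$(j)$.b is shown individually equivalent to satisfiability, the equivalence with their conjunction follows formally.

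The main obstacle, and the only genuinely technical step, is establishing this disjointness condition from Rule~$(i)$.b. I would argue by \emph{confinement}: any hypothetical colliding pair already lives inside a single component of the graph defining Rule~$(i)$, so local satisfiability of that component contradicts the collision. For $i=1$, a pair in $R_1\cap F_1$, in $R_{\Rone}\cap F_{\Rone}$, or in $R_0\cap F_0$ has its two endpoints joined by an edge of $G_0=(V,R_1\cup R_{\Rone}\cup F_0)$ through $R_1$, $R_{\Rone}$, and $F_0$ respectively, hence the endpoints lie in a common connected component $C$; then the pair witnesses $R_\ell[C]\cap F_\ell[C]\neq\emptyset$, contradicting that $\mathcal{H}[C]$ is satisfiable w.r.t.\ $\mathcal{F}[C]$. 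The cases $i=2$ and $i=3$ are analogous, and the delicate point is $R_{\Rone}\cap F_{\Rone}$ for $i=3$: a pair $(x,y)\in R_{\Rone}\cap F_{\Rone}$ supplies the arc $x\to y$ from $R_{\Rone}$ \emph{together with} the arc $y\to x$ from $\overleftarrow{F_{\Rone}}$, so $x$ and $y$ sit in a common \emph{strongly} connected component of $G_{\Rone}$. This is exactly why $G_{\Rone}$ is built using $\overleftarrow{F_{\Rone}}$ rather than $F_{\Rone}$, and why the symmetry of $R_0$ and $R_1$ is exploited for the remaining collisions (each such pair yields a two-cycle, forcing a common strongly connected component).

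Assembling the pieces, the forward direction gives satisfiability $\Rightarrow$ Rule~$(i)$.b and satisfiability $\Rightarrow$ Rule~$(j)$.b, while the converse gives Rule~$(i)$.b $\Rightarrow$ satisfiability and, symmetrically, Rule~$(j)$.b $\Rightarrow$ satisfiability. Thus satisfiability, Rule~$(i)$.b, and Rule~$(j)$.b are pairwise equivalent, which in particular establishes the claimed equivalence between satisfiability of $\mathcal{H}$ w.r.t.\ $\mathcal{F}$ and the conjunction Rule~$(i)$.b $\wedge$ Rule~$(j)$.b.
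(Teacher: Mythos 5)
Your proposal is correct, and its skeleton is exactly the route the paper intends: the forward direction is Lemma~\ref{lem:satisf-partition} applied to the component partitions of Rules~$(i)$ and $(j)$, and the converse is the ``$\Leftarrow$'' direction of Theorem~\ref{thm:sat}. The paper itself offers no written proof of this corollary --- it merely asserts that Lemma~\ref{lem:satisf-partition} ``immediately implies'' it --- so the comparison is really between your argument and that one-line assertion. What you add is genuinely worthwhile: Theorem~\ref{thm:sat} can only certify satisfiability once the disjointness condition $R_0\cap F_0=R_1\cap F_1=R_{\Rone}\cap F_{\Rone}=\emptyset$ is in hand, and the corollary as literally stated never assumes it (in the paper's usage this condition is checked once, in Line~\ref{lin:valid-assumption} of Alg.~\ref{alg:build-cotree}, and is inherited by all induced sub-relations, which is why the gap is harmless there). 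Your confinement argument closes this gap intrinsically: every colliding pair spans an edge of the auxiliary graph of Rule~$(i)$ --- or, for Rule~(3), a two-cycle, using the symmetry of $R_0$ and $R_1$ and the fact that $G_{\Rone}$ is built from $\overleftarrow{F_{\Rone}}$ rather than $F_{\Rone}$ --- hence lies inside a single (strongly) connected component, where it contradicts local satisfiability, since any partial set satisfiable w.r.t.\ a forbidden set must already be disjoint from it. This makes the corollary true exactly as stated, and your observation that each of Rule~$(i)$.b and Rule~$(j)$.b is \emph{individually} equivalent to satisfiability (so that Rule~$(j)$.b is not needed for the converse step) is a clean way to package the ``rule order does not matter'' message the paper draws from this corollary.
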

Thm.\  \ref{thm:sat} together with Corollary \ref{cor:any-rule} immediately yields
a polynomial time recursive algorithm to determine satisfiability of a homology set with
respect to a forbidden set. Moreover, if $\mathcal{H}$ is satisfiable w.r.t\
$\mathcal{F}$ the proof of Thm.\  \ref{thm:sat} describes
how to construct a cotree explaining a full satisfiable set w.r.t\
$\mathcal{F}$ extended from $\mathcal{H}$. 
The algorithm is summarized in 
Alg.\  \ref{alg:build-cotree}.

\thmcorrectness*
\begin{proof}
	The correctness of Alg.\  \ref{alg:build-cotree} follows from Thm.\  \ref{thm:sat}
	and Corollary \ref{cor:any-rule}. To be more precise, Alg.\ 
	\ref{alg:build-cotree} checks whether condition Rule (0) or Rule (1-3)(a),
	as defined in Thm.\  \ref{thm:sat}, is satisfied.
	
	As a preprocessing step, the algorithm first checks if $R_0 \cap F_0 = R_1
	\cap F_1 = R_{\Rone} \cap F_{\Rone} = \emptyset$ (Line
	\ref{lin:valid-assumption}). If this is not the case, clearly
	$\mathcal{H}$ cannot be satisfiable w.r.t.\ $\mathcal{F}$, and hence the
	algorithm stops. If	no forbidden pairs are present in $\mathcal{H}$, 
	we will try and build the cotree explaining a full set,
	extended from $\mathcal{H}$, that is satisfiable w.r.t.\ $\mathcal{F}$. Note, Corollary
	\ref{cor:any-rule} implies that the order of applied rules does not matter.

	If Rule (0) is satisfied, i.e., $|V| = 1$, then $\mathcal{H}$ is already full and
	satisfiable w.r.t. $\mathcal{F}$,
	and $(T;t) = ((V, \emptyset);\emptyset)$
	is a valid cotree explaining $\mathcal{H}$, and thus returned on Line
	\ref{lin:r0}.

	If Rule (1a) (resp.\ (2a)) is satisfied (Line \ref{lin:r1}), then
	Alg.\  \ref{alg:build-cotree} is called recursively on each of the
	connected components defined by $G_0$ ($G_1$ resp.) to verify that Rule
	(1b) (resp.\ (2b)) is	satisfied. 
	If the connected components are indeed satisfiable, the obtained cotrees
	are joined into a single cotree explaining a full set $\Hs$ that is 
	satisfiable w.r.t.\ $\mathcal{F}$ as described in the proof
	of Thm.\  \ref{thm:sat}. This set $\Hs$ is an extension 
	of $\mathcal{H}$.  The resulting cotree is then returned. 

	If Rule (3a) is satisfied (Line \ref{lin:r3}), 
	Alg.\  \ref{alg:build-cotree} is then called recursively
	to verify whether Rule (3b) is satisfied, and if so, the obtained cotrees
	are joined into a single cotree $(T;t)$. 
	Note, $Q\coloneqq G_{\Rone} / \{C_1,\dots,C_k\}$ is a DAG and 
	there exists a topological order $\pi$ on $Q$. 
	The cotrees are joined to obtain cotree $(T;t)$	from left to right such that
	ordering $\pi$ is preserved, as in the construction described in the proof
	of Thm.\  \ref{thm:sat}. Thus, we obtain a cotree $(T,t)$
  that explains a full
	set $\Hs$ that is satisfiable w.r.t.\ $\mathcal{F}$ and an extension of $\mathcal{H}$.
	 Finally, $(T;t)$ is returned.

	If neither of the rules are satisfied, Thm.\  \ref{thm:sat} implies that $\mathcal{H}$ is
	\emph{not} satisfiable w.r.t.\ $\mathcal{F}$ and the algorithm stops. Hence, a
	cotree is returned that explains the partial set $\mathcal{H}$
	if and only if $\mathcal{H}$ is satisfiable  w.r.t. $\mathcal{F}$.

	For the runtime, observe first that 
	$R_0 \cap F_0 = R_1	\cap F_1 = R_{\Rone} \cap F_{\Rone}$ (Line \ref{lin:intersect-empty}) can
	be computed in $O(m)$ time. Moreover, 
	all digraphs defined in
	Rule (1-3) of Thm.\  \ref{thm:sat}, can be
	constructed in $O(n + m)$ time. Furthermore, each of the following tasks
	can be performed in  $O(n + m)$ time: finding the (strongly)
	connected components of each digraph, building the defined quotient
	graph, and finding the topological order on the quotient graph.
	 Similarly, constructing $V[C_i]$, $\mathcal{H}[C_i]$, and
	$\mathcal{F}[C_i]$ (Line \ref{lin:run-tree1}, \ref{lin:run-tree2}) for each
	(strongly) connected component, can also be done in $O(n	+ m)$ time 
	for all components, by going through every element in $V$, $\mathcal{H}$,
	and $\mathcal{F}$ and assigning each pair to their respective induced subset.

	Thus, every pass of \texttt{BuildCotree} takes $O(n + m)$ time. Since every call
	to \texttt{BuildCotree} adds a vertex to the final constructed
	cotree, and the number of vertices in a tree is bounded by the number $n$ of
	leaves, it follows that \texttt{BuildCotree} can be called  at most $O(n)$
	times. Therefore, we end in an overall running time of $O(m+n(n+m)) =
	O(n^2 + nm)$ for Alg.\  \ref{alg:build-cotree}.
\qed
\end{proof}

Alg.\  \ref{alg:build-cotree} provides a cotree, $(T;t)$, explaining a full
satisfiable set, $\Hs = \{ \Rs_0, \Rs_1, \Rs_{\Rone} \}$, extended from a given
partial set $\mathcal{H}$,
such that $\Hs$ is satisfiable w.r.t.\ a forbidden set $\mathcal{F}$. The
algorithm however, does not directly output $\Hs$. Nevertheless, $\Hs$ can
easily be constructed given $(T;t)$:

\begin{theorem}
	\label{thm:recover-set}
	Let $\mathcal{H} = \{ R_0, R_1, R_{\Rone} \}$ be a partial set that is 
	satisfiable w.r.t\ a forbidden set $\mathcal{F} = \{ F_0, F_1, F_{\Rone} \}$
	and $(T,t)$ be a cotree that explains $\mathcal{H}$. 
	Then, a full satisfiable homology set $\Hs = \{ \Rs_0, \Rs_1, \Rs_{\Rone}	\}$ 
	 that extends 	$\mathcal{H}$ can be constructed in $O(|V|^2)$ time.
\end{theorem}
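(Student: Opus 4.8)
The plan is to read the full set $\Hs$ off the cotree directly, setting $\Rs_0 := R_0(\mathcal{T})$, $\Rs_1 := R_1(\mathcal{T})$ and $\Rs_{\Rone} := R_{\Rone}(\mathcal{T})$, i.e.\ exactly the three homology relations induced by $(T,t)$ as in Section~3. Verifying that $\Hs$ has the required properties is then essentially definitional. Since $(T,t)$ explains $\mathcal{H}$, we have $R_i \subseteq R_i(\mathcal{T}) = \Rs_i$ for each $i \in \{0,1,\Rone\}$, so $\Hs$ extends $\mathcal{H}$. Every ordered pair $(x,y) \in \Virr$ has a well-defined $\lca_T(x,y)$ carrying exactly one label in $\{0,1,\Rone\}$, so the pair lands in precisely one of $\Rs_0$, $\Rs_1$, $\Rs_{\Rone}^{\mathrm{sym}}$; hence $\Rs_0 \cup \Rs_1 \cup \Rs_{\Rone}^{\mathrm{sym}} = \Virr$ and $\Hs$ is full. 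Satisfiability is immediate because $(T,t)$ is itself a cotree realizing $\Rs_i = R_i(\mathcal{T})$, so $(T,t)$ explains $\Hs$; equivalently $\Gs = (V,\Rs_1 \cup \Rs_{\Rone})$ is the di-cograph whose cotree is $(T,t)$, and Thm.\ \ref{thm:character-satisf} applies. Finally, because $(T,t)$ explains $\mathcal{H}$, which is satisfiable w.r.t.\ $\mathcal{F}$, we also get $\Rs_i \cap F_i = \emptyset$, so $\Hs$ is in fact satisfiable w.r.t.\ $\mathcal{F}$.

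The only substantive task is the $O(|V|^2)$ running time, since the dominant cost is labeling each of the $\Theta(|V|^2)$ ordered leaf pairs with the relation it belongs to. I would group the pairs by their common ancestor. A single bottom-up pass computes the leaf set $L(v)$ of every subtree, which costs $O\!\left(\sum_v |L(v)|\right) = O(|V|^2)$ in the worst case. At each internal vertex $v$ with ordered children $c_1,\dots,c_m$, the pairs with $\lca_T = v$ are exactly those whose two leaves lie in distinct child subtrees. For $x \in L(c_a)$ and $y \in L(c_b)$ with $a < b$ I assign the pair according to $t(v)$: if $t(v)=1$ both $(x,y),(y,x)$ enter $\Rs_1$; if $t(v)=0$ both enter $\Rs_0$; and if $t(v)=\Rone$ then $x$ is left of $y$, so $(x,y)$ enters $\Rs_{\Rone}$ while $(y,x)$ does not. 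Because every unordered leaf pair is assigned at exactly one internal vertex, the enumeration of cross-child pairs totals $\sum_v \sum_{a<b} |L(c_a)|\,|L(c_b)| = \binom{|V|}{2}$, so the whole assignment is $\Theta(|V|^2)$.

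An alternative that avoids materializing leaf sets is to preprocess $T$ for constant-time lowest-common-ancestor queries (e.g.\ Harel--Tarjan) in $O(|V|)$ time, and in a separate depth-first traversal assign every leaf an integer position reflecting the left-to-right order of $T$; then for each of the $O(|V|^2)$ pairs one queries $\lca_T(x,y)$ in $O(1)$, reads its label, and in the $\Rone$ case places $(x,y)$ in $\Rs_{\Rone}$ iff $x$'s position precedes $y$'s. This again gives $O(|V|^2)$.

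I expect the only point needing care to be the running-time bookkeeping rather than correctness: a naive scheme that recomputes an $\lca$ per pair from scratch would cost $\Theta(|V|^2 \cdot \mathrm{height}(T))$, so the argument hinges entirely on amortizing to $O(1)$ per pair, which either the grouping-by-$\lca$ charge or the $\lca$-preprocessing delivers; establishing that the recovered $\Hs$ is full, satisfiable (w.r.t.\ $\mathcal{F}$) and extends $\mathcal{H}$ is a direct restatement of the definitions of the homology relations and of ``explains''.
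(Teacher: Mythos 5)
Your proposal is correct and matches the paper's own proof: the paper likewise defines $\Rs_i = R_i(\mathcal{T})$ by reading labels $t(\lca(x,y))$ off the cotree, and obtains $O(|V|^2)$ time exactly via your second alternative, i.e., constant-time $\lca$ queries after a linear-time preprocessing step. Your first variant (grouping pairs by their $\lca$ and enumerating cross-child pairs) is a valid alternative bookkeeping of the same idea, not a genuinely different argument, and your verification that $\Hs$ is full, extends $\mathcal{H}$, and respects $\mathcal{F}$ is if anything slightly more explicit than the paper's.
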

\begin{proof}
	To recap, the leaf set of $(T=(W,E);t)$ is $L(T)=V$. 
	The full set $\Hs$ can be obtained from $(T;t)$ as follows: For every two 
	leaves $x,y \in V$, where $x$ is left of $y$ in $T$ and $(x,y)\not\in
	R_0\cup R_1 \cup{R_{\Rone}^{sym}}$, 
	we add the unassigned pair $(x,y)$ to $R_0$, $R_1$, or $R_{\Rone}$
	depending on whether $t(\lca(x,y))$ equals $0$, $1$, or $\Rone$
	respectively. If $(x,y)$ was added to $R_0$ (resp.\ $R_1$), then we also add
	$(y,x)$ to $R_0$ (resp.\ $R_1$). 
  Since we address all two 	leaves $x,y \in V$, we obtain 
	$\Rs_0=R_0({\mathcal{T}})$, 	$\Rs_1=R_1({\mathcal{T}})$, 
	and $\Rs_{\Rone}=R_{\Rone}({\mathcal{T}})$. Hence, 
	the resulting	set $\Hs$ is full and explained by
	$(T;t)$. %, and therefore satisfiable w.r.t\ $\mathcal{F}$.

	It was shown in \cite{Baruch:88}, that the lowest common ancestor, $\lca(x,y)$, 
	can be accessed in constant time, after an $O(|W|)$ preprocessing step.
	Since we look up the label $t(\lca(x,y))$ for each pair of leaves in $T$, it follows that $\Hs$
	can be constructed in $O(|L(T)|^2) = O(|V|^2)$ time. Hence, $\Hs$ can be
	constructed from $(T,t)$ in $O(n^2)$. 
\qed
\end{proof}

%%%%%%%%%%%%%%%%%%%%%%%%%%%%%%%%%%%%%%%%%%%%%%%%%%%%%%%%%%%%%%%%%%%%%%%%%%%%%%%555

\subsection*{The relative difference of original and recovered full sets}
The relative difference between the true full set $\Hs$ 
on $\Virr$ and the respective recovered full set $\mathcal{H}'$
is given by
%\[\frac{|\Rs_0(\mathcal{T}) \Delta R'_0| +|\Rs_1(\mathcal{T}) \Delta R'_1|  + 
\[\frac{|\Rs_0 \Delta R_0'| +|\Rs_1 \Delta R_1'|  + 2|\Rs_{\Rone} \cap \overleftarrow{R'_{\Rone}}| + 2|(\Rs_{\Rone} \setminus \overleftarrow{R'_{\Rone}})\Delta R_{\Rone}'|}{|L|^2-|L|},\]
where $\Delta$ denotes the symmetric set difference.

Here, the term $|L|^2-|L|$ gives the number of all possible elements in a symmetric and irreflexive relation on $L$. 
Thus, the terms $|\Rs_0 \Delta R_0'|/(|L|^2-|L|)$ and $|\Rs_1 \Delta R_1'|/(|L|^2-|L|)$ 
gives the relative difference for the 
symmetric relations $\Rs_0,R_0'$ and $\Rs_1,R_1'$, respectively. For the term  $f=2|\Rs_{\Rone} \cap \overleftarrow{R'_{\Rone}}|/(|L|^2-|L|)$, recall that 
$\Rs_{\Rone}$ and $R_{\Rone}'$ are anti-symmetric relations and observe that
 $(x,y)\in \Rs_{\Rone}, (y,x)\in R_{\Rone}'$ implies that $(x,y)\in \Rs_{\Rone} \cap \overleftarrow{R'_{\Rone}}$. 
The term $\frac{1}{2}(|L|^2-|L|)$ gives the number of all possible elements in an
anti-symmetric irreflexive relation on $L$. Hence, to count the relative differences
of these sets we have to add the term $f$. Furthermore, $(\Rs_{\Rone} \setminus \overleftarrow{R'_{\Rone}})\Delta R_{\Rone}'$
contains all pairs $(x,y) \in \Rs_{\Rone}$ for which there is no pair on $x$ and $y$ contained in $R_{\Rone}'$, 
or {\em vice versa}. By similar arguments as before, we finally have to add the term
$2|(\Rs_{\Rone} \setminus \overleftarrow{R'_{\Rone}})\Delta R_{\Rone}'|/(|L|^2-|L|)$.

%%
% ---- Bibliography ----
%
%\section*{References	}
%\vspace{-1in}
\bibliographystyle{splncs03}
\bibliography{biblio}

\end{document}